\documentclass[11pt]{article}

\usepackage[letterpaper,margin=1in]{geometry}
\usepackage{amsmath,amssymb,amsthm,mathtools}
\usepackage{microtype}
\usepackage{needspace}
\usepackage{xcolor}
\usepackage[colorlinks,allcolors=blue,bookmarksdepth=2]{hyperref}
\usepackage[capitalize]{cleveref}

\hypersetup{
  pdfauthor={},
  pdftitle={Exact CVP Is NP-Complete for Principal Cyclotomic Ideals},
  pdfsubject={},
  pdfkeywords={}
}

\newtheorem{theorem}{Theorem}[section]
\newtheorem{lemma}[theorem]{Lemma}
\newtheorem{corollary}[theorem]{Corollary}

\theoremstyle{definition}
\newtheorem{definition}[theorem]{Definition}
\newtheorem{remark}[theorem]{Remark}

\newcommand{\Z}{\mathbb Z}
\newcommand{\Q}{\mathbb Q}
\newcommand{\R}{\mathbb R}

\newcommand{\NP}{\mathsf{NP}}

\newcommand{\Ppoly}{\mathsf{P}/\mathrm{poly}}

\newcommand{\dist}{\operatorname{dist}}
\newcommand{\coeff}{\operatorname{coeff}}

\begin{document}

\begin{titlepage}
\thispagestyle{empty}

\begin{center}
  \vspace*{1.5cm}
  {\LARGE\bfseries
    Exact CVP Is NP-Complete for Principal Cyclotomic Ideals
    \par}
\end{center}

\vspace{0.55cm}

\begin{center}
  {\large Jiaqi Liu \quad Yansong Feng \quad Yanbin Pan\par}
  \vspace{0.8em}
  {\small
    State Key Laboratory of Mathematical Sciences,\par
    Academy of Mathematics and Systems Science, Beijing, China\par}
  \vspace{0.35em}
  {\small\texttt{\{ljqi,fengyansong,panyanbin\}@amss.ac.cn}\par}
\end{center}

\vspace{0.55cm}

\begin{abstract}
We prove that exact Euclidean decision-CVP is $\NP$-complete on the
coefficient lattices of nonzero principal ideals in the power-of-two
cyclotomic rings $R_d=\Z[y]/(y^d+1)$.  A deterministic reduction from
Exact Cover by 3-Sets (X3C) produces an integral target and an integer squared
threshold $\Delta$ such that the closest squared distance is exactly
$\Delta$ in YES instances and at least $\Delta+4$ in NO instances.
Moreover, the ideal elements whose squared distance from the target under the
coefficient embedding is at most $\Delta$ are in bijection with the exact
covers of the given X3C instance.  This also gives $\NP$-hardness of exact
search-CVP under polynomial-time Turing reductions.

We also transfer the resulting principal-ideal CVP instances to
full-rank principal ideals of the cyclic quotient ring $\Z[X]/(X^D-1)$, where
$D=2d$.  Their coefficient lattices are invariant under a cyclic rotation by
one coordinate.  The lift preserves principality, doubles the dimension, and
scales the squared distances of corresponding elements by eight.  Thus, on
principal cyclic ideal lattices, exact decision-CVP is $\NP$-complete and exact
search-CVP is $\NP$-hard.

The cyclotomic and cyclic hardness results also admit uniformly
computable fixed-family forms.  For each X3C universe size, one principal
cyclotomic ideal and one principal cyclic ideal can be fixed before the
collection of triples is known, and only the respective targets and squared
thresholds depend on the collection.  Thus exact decision-CVP remains
$\NP$-complete on both fixed families.  If exact decision-CVP with
preprocessing (CVPP) were solvable in polynomial time on either family, then
$\NP\subseteq\Ppoly$.  By the Karp--Lipton theorem, such a preprocessing scheme
would collapse the polynomial hierarchy to $\Sigma_2^{\mathsf P}$.  To our
knowledge, the cyclic results resolve the exact decision versions of
Micciancio's questions of whether CVP is $\NP$-hard on cyclic lattices and on a
fixed family of cyclic lattices, even under the stronger restriction to
full-rank principal cyclic ideals.
\end{abstract}

\end{titlepage}

\pagenumbering{arabic}
\setcounter{page}{1}

\section{Introduction}
\label{sec:intro}
A \emph{lattice} is the set of all integer combinations of
linearly independent vectors.  Two basic lattice problems are the
\emph{shortest vector problem} (SVP) and the
\emph{closest vector problem} (CVP).  SVP asks for a shortest nonzero lattice
vector.  CVP asks for a lattice point closest to a given target.  Both
problems are central in the study of lattices and in lattice-based
cryptography; see~\cite{MicciancioGoldwasser2002} for background.

Cryptographic schemes often use lattices with extra algebraic structure.  An
\emph{ideal} in a polynomial quotient ring is closed under
multiplication by every ring element, and a \emph{principal ideal}
is generated by a single element.  In
coefficient coordinates, ring multiplication becomes polynomial convolution.
For example, ideals in \(\Z[X]/(X^D-1)\) give lattices closed under cyclic
rotation.  This structure gives shorter descriptions and faster arithmetic.
It is used in lattice-based cryptography, including the NTRU cryptosystem,
Ring-LWE-based schemes, compact one-way functions, collision-resistant hashing,
and fully homomorphic encryption~\cite{HoffsteinPipherSilverman1998,
LyubashevskyPeikertRegev2010,Micciancio2002,PeikertRosen2006,
LyubashevskyMicciancio2006,Gentry2009}.  The same structure may
also enable specialized algorithms.

For general lattices, much is known about hardness.  Exact
Euclidean decision-CVP is \(\NP\)-complete~\cite{vanEmdeBoas1981}.  CVP is
also \(\NP\)-hard to approximate within almost-polynomial
factors~\cite{DinurKindlerSafra1998}.  Exact Euclidean SVP is
\(\NP\)-hard under randomized reductions~\cite{Ajtai1998}.  Approximating
Euclidean SVP within some constant factor is also \(\NP\)-hard under randomized
reductions~\cite{Micciancio1998}.  Hair and
Sahai recently proved deterministic \(\NP\)-hardness of exact SVP in the
\(\ell_p\)-norm for every \(p>2\)~\cite{HairSahaiSTOC26}, and
almost-polynomial-factor hardness for every finite \(p\geq2\) under
deterministic subexponential-time reductions~\cite{HairSahaiFOCS26}.
These reductions can
build arbitrary lattice bases and need not preserve the ideal or principal
ideal structure.  Earlier work gave reductions between problems on ideal
lattices, but these reductions do not prove the \(\NP\)-hardness of exact CVP
on principal ideals~\cite{PeikertRosen2007}.  This gap matters because
cyclotomic structure can lead to special algorithms.  Examples include
recovering short generators of some principal ideals~\cite{CramerEtAl2016}
and, under number-theoretic assumptions, finding mildly short vectors in
cyclotomic ideal lattices in quantum polynomial
time~\cite{CramerDucasWesolowski2021}.

\emph{CVP with preprocessing} (CVPP) asks what happens when the lattice is known
before the target.  The offline step may take unlimited time, but it may keep
only polynomially many bits; the online step must answer target-and-threshold
queries in polynomial time.  Micciancio showed that exact
decision-CVPP has no polynomial-time solution on general lattices unless
\(\NP\subseteq\Ppoly\)~\cite{Micciancio2001}.  For approximate CVPP, constant-factor
approximation is hard~\cite{FeigeMicciancio2004}.  Stronger inapproximability
factors are known under quasipolynomial-time hardness
assumptions~\cite{AlekhnovichEtAl2005,KhotPopatVishnoi2012}.  Separately,
Micciancio asked whether CVP
is \(\NP\)-hard on cyclic lattices and whether it remains hard for a fixed
family with one cyclic lattice for each input size~\cite{Micciancio2007}.

This leads to our main question.

\begin{quote}
\emph{Is exact Euclidean CVP \(\NP\)-hard for ideal and cyclic lattices, even
for full-rank principal ideals?  Does this hardness still hold for a
uniformly computable family with one ideal per input size, fixed before the
target is known?}
\end{quote}

\subsection{Our results}
\label{sec:intro-results}

Our answer to both questions is yes.  Our main result concerns principal
ideals in the power-of-two cyclotomic rings
\[
 R_d=\Z[y]/(y^d+1),
\]
where \(d\) is a power of two.

\begin{theorem}[{\normalfont Principal cyclotomic ideals, informal}]
\label{thm:intro-principal-ideal}
For \(d\) ranging over powers of two, exact Euclidean decision-CVP is
\(\NP\)-complete on the coefficient lattices of nonzero principal ideals in
\(R_d\).  A deterministic polynomial-time many-one reduction from Exact
Cover by 3-Sets (X3C) produces an integral target and an integer squared threshold
\(\Delta\).  The closest squared distance is exactly \(\Delta\) in a YES
instance and at least \(\Delta+4\) in a NO instance.  Moreover,
the ideal elements whose squared distance from the target under the coefficient
embedding is at most \(\Delta\) are in bijection with the exact covers of the
X3C instance.
\end{theorem}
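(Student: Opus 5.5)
The plan is to realize a van Emde Boas style subset-sum lattice \emph{inside} a principal ideal of $R_d$, using the fact that ideals of the form $(y-a)$ have a very explicit coefficient lattice. For an integer $a\ge 2$, reduction modulo $y-a$ identifies $R_d/(y-a)$ with $\Z/N\Z$, where $N=a^d+1$. Hence
\[
 (y-a)=\Bigl\{x\in\Z^d:\ \textstyle\sum_{i<d} x_i a^i\equiv 0 \pmod N\Bigr\}.
\]
In this ring, $y^d\equiv -1$, which matches $a^d\equiv -1 \pmod N$. So the coefficient lattice of a principal ideal is a single-congruence lattice with geometric weights $w_i=a^i$. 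I would also allow generators $2(y-a)$ or $c\,(y-a)$ when I need the lattice to be scaled, so that the target can be chosen integral. The first step is to verify this description and to compute a basis in polynomial time. Membership in $\NP$ is then routine: the witness is a lattice vector $x$, checked through $\|x-t\|^2\le\Delta$ and the congruence.

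For the reduction, I take an X3C instance with universe $[3q]$ and triples $S_1,\dots,S_m$. I choose $d$ to be a power of two at least $3q+m+1$, and a base $a=B$ that is larger than the maximum possible carry, say $B>4m$. I reserve $3q$ coordinates as \emph{element positions} and one coordinate (or a short block) per triple as \emph{selector positions}. The intended lattice vector sets each selector $x_j\in\{0,1\}$ to indicate whether $S_j$ is chosen, and sets the element positions to the negated multiplicities $-\sum_j x_j[i\in S_j]$.

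To make every choice of selectors consistent with the congruence, I would not use the bare weights $B^i$ for the selectors. Instead I would precompose with a ring automorphism or a unit, or equivalently choose a generator $g=(y-a)\cdot u$ with $u$ a product of small units. The goal is that the $j$-th selector column of the lattice is congruent to $\sum_{i\in S_j}a^i$ modulo $N$. Concretely, I expect to fix a $\Z$-basis of the ideal consisting of $N e_0$, the vectors $e_i - a e_{i-1}$, and extra vectors of the shape $e_{\sigma(j)}+\sum_{i\in S_j} e_i$ shifted appropriately. The target $t$ then puts $1$ on each element position and a half-integral bias on each selector, after scaling the whole lattice by $2$ so that $t$ is integral. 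With this choice, each coordinate contributes a fixed amount when its value is in the intended range and at least $4$ more otherwise. The threshold $\Delta$ is set to the contribution of an exact cover: element positions contribute $0$, and the $m$ selectors contribute a constant.

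The correctness argument has two halves. First, every exact cover gives a lattice vector at squared distance exactly $\Delta$. Second, any lattice vector within $\Delta$ must have all selectors in $\{0,1\}$ and all element residuals zero, so it encodes an exact cover. The gap of $4$ comes from the factor-$2$ scaling, since every deviation changes some even-valued coordinate by at least $2$. The bijection then follows because the selector vector determines the whole lattice vector uniquely.

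The main obstacle is the middle step: forcing a \emph{principal} ideal, whose structure is rigidly determined by one congruence with geometric weights, to contain exactly the gadget vectors while admitting no unintended short vectors near $t$. The danger is that the large modulus $N$ produces wraparound combinations. I would first try to handle this by taking $B$ so large that any vector with entries bounded by $O(\sqrt\Delta)$ satisfies the congruence over $\Z$ rather than just modulo $N$. The congruence then becomes an exact base-$B$ digit identity, and the no-carry argument forces the exact-cover structure.
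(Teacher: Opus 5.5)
There is a genuine gap. It sits exactly at the step you flag as the main obstacle, and the remedy you propose makes it fatal.

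Multiplying the generator by a unit does not change the ideal: $((y-a)u)=(y-a)$. An automorphism $y\mapsto y^k$ with $k$ odd only permutes coordinates up to sign, and scaling the generator by $c$ only scales the lattice. So every lattice available to you is a multiple of $\{x\in\Z^d:\sum_i \varepsilon_i x_i a^{\pi(i)}\equiv 0\pmod{a^d+1}\}$, with $\varepsilon_i=\pm1$ and $\pi$ a permutation. Each coordinate therefore carries $\pm$ a \emph{single} power of $a$, and you have no freedom to give a selector coordinate the weight $\sum_{i\in S_j}a^i$.

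Worse, suppose $|x_i|\le a-1$ for all $i$. Then $|\sum_i \varepsilon_i x_i a^{\pi(i)}|\le a^d-1<a^d+1$, so the congruence holds over $\Z$, and reducing modulo $a$ digit by digit forces $x=0$. Hence no nonzero vector with all entries of absolute value below $a$ lies in the ideal. In particular, none of your gadget vectors $e_{\sigma(j)}+\sum_{i\in S_j}e_i$ (with any signs) belongs to it.

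In the no-carry regime $a>\|t\|_\infty+\sqrt\Delta$ that you propose, the only lattice point that could lie within squared distance $\Delta$ of $t$ is $0$. The instance is then trivial and can never exhibit one closest point per exact cover. The ``exact base-$B$ digit identity'' does not force exact-cover structure; it forces the zero vector. Replacing $a$ by a general residue $r$ with $r^d\equiv-1\pmod N$ is no evident repair either: the weights stay geometric, and the ideal $(N,y-r)$ need not be principal.

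The missing idea is to stop describing the ideal by a congruence and instead control the multiplier $q$ in $gq$ directly. The paper takes $g=2h(y^2)+2Py$, where $h=\sum_{A_{ij}=1}x^{\rho_i-\alpha_j}$ uses Sidon exponents $\alpha_k$ and well-separated row coordinates $\rho_i$. Then $h\,a_{\boldsymbol\xi}$ collects $(\mathbf A\boldsymbol\xi)_i$ at $\rho_i$ and places every cross term at its own position, which the target compensates with an entry $1$ (Lemma~\ref{lem:boolean-identity}). So the incidence structure lives in a sparse convolution, not in weights.

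Writing $q=a(y^2)+yb(y^2)$, the term $2Py$ contributes $2P\,(xb,a)$ to the (even, odd) coefficient blocks of $gq$. The odd part $Py\,u_{\boldsymbol\mu}(y^2)$ of the target then makes the $P$-dependent cost $4P^2\Phi(a,b)$ minimal exactly when $b=0$ and $a$ is a Boolean encoding $a_{\boldsymbol\xi}$ with $\boldsymbol\xi\le\boldsymbol\mu$. Young's inequality $\|(ha,hb)\|_2\le W\|(a,b)\|_2$ and the choice of $P$ then show that every other $q$ has squared distance above $\Delta_{\boldsymbol\mu}+4$ (Lemma~\ref{lem:admissible}). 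This multiplier-forcing step is what replaces your no-carry argument.
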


Every nonzero ideal in \(R_d\) has full \(\mathbb Z\)-rank.
By Lemma~\ref{lem:canonical}, the theorem also holds under the
canonical embedding after multiplying all squared distances and thresholds
by \(d\).  This bijection also gives
\(\NP\)-hardness of exact search-CVP under polynomial-time Turing
reductions.

The construction also lifts to full-rank principal ideals in
\(S_D=\Z[X]/(X^D-1)\), where \(D=2d\), and proves \(\NP\)-completeness of
exact decision-CVP on their coefficient lattices, which are cyclic
lattices.  Exact search-CVP on these
lattices is \(\NP\)-hard under polynomial-time Turing reductions.  Since
\(X^D-1\) is reducible, these are coefficient lattices in a cyclic quotient,
not canonical ideal lattices from the ring of integers of a single number
field.

For each X3C universe size \(m\), we also give one uniformly
computable principal cyclotomic ideal and one principal cyclic ideal, both of
dimension \(\Theta(m^7)\).  Only the target and squared threshold depend on
the triple collection, while exact decision-CVP remains \(\NP\)-complete.
Unless \(\NP\subseteq\Ppoly\), exact decision-CVPP has no
polynomial-time solution on either family.  Consequently, the polynomial
hierarchy would collapse to \(\Sigma_2^{\mathsf P}\) if such a solution
existed~\cite{KarpLipton1980}.  To our knowledge, these
cyclic results answer Micciancio's two open questions about exact CVP.

\subsection{Technical overview}
\label{sec:intro-overview}

The main challenge is to control every element of the ideal.
In a reduction to a general lattice, we may choose the basis vectors
separately.  For a principal ideal \((g)\subseteq R_d\)
generated by \(g\), however, the natural basis vectors are
the \emph{negacyclic shifts}
\[
 g,yg,\ldots,y^{d-1}g,
\]
and every element of \((g)\) has the form \(gq\) for some
\(q\in R_d\).  Thus we must encode Boolean vectors by suitable choices of
\(q\) and rule out every other \(gq\).

We achieve our goal by reducing from X3C~\cite{GareyJohnson}.
An X3C instance consists of a universe of size \(m\) and
\(n\) three-element
subsets, called triples, and asks whether some subcollection covers every
element exactly once.  We write this condition as the following integer linear
system with Boolean unknowns:
\begin{equation}
\label{eq:intro-x3c-system}
 \mathbf A\boldsymbol\xi=\mathbf b,
 \qquad
 \mathbf A\in\{0,1\}^{m\times n},\quad
 \mathbf b\in\{0,1\}^m,\quad
 \boldsymbol\xi\in\{0,1\}^n.
\end{equation}
Here \(\mathbf A\) is the incidence matrix and
\(\mathbf b=\mathbf 1_m\).  A Boolean solution
\(\boldsymbol\xi\) specifies an
exact cover.  We encode \(\boldsymbol\xi\) as
\[
 a_{\boldsymbol\xi}(x)
 :=\sum_{k=1}^n\xi_kx^{\alpha_k}.
\]

\paragraph{Encoding the Boolean equations.}
We choose the exponents \(\alpha_1,\ldots,\alpha_n\) so that all nonzero
ordered differences \(\alpha_k-\alpha_j\) are distinct.  A classical Sidon
construction provides such exponents of size
\(O(n^2)\)~\cite{ErdosTuran41}.
We choose pairwise well-separated coefficient positions
\(\rho_1,\ldots,\rho_m\) for the \(m\) rows and set
\[
 h(x):=\sum_{i,j:A_{ij}=1}x^{\rho_i-\alpha_j}.
\]
The product
\(x^{\rho_i-\alpha_j}x^{\alpha_k}\) contributes at coefficient position
\[
 \rho_i-\alpha_j+\alpha_k.
\]
When \(k=j\), all contributions for row \(i\) occur at
\(\rho_i\) and sum to \((\mathbf A\boldsymbol\xi)_i\).  When \(k\neq j\),
these products occupy pairwise distinct positions, all separate from the row
positions \(\rho_{i'}\).

To obtain a gap in coefficient \(\ell_2\)-distance, we define the
target polynomial \(t\) by assigning coefficient \(2b_i\) at
\(x^{\rho_i}\) and coefficient one at every position
\(\rho_i-\alpha_j+\alpha_k\) with \(A_{ij}=1\) and \(k\neq j\).  The
positions \(\rho_i\) contribute
\(4\|\mathbf A\boldsymbol\xi-\mathbf b\|_2^2\).  At every other specified
position the difference is \(2\xi_k-1\in\{-1,1\}\).  We denote the number of
these positions by \(\nu\), so their total squared contribution is \(\nu\),
independently of \(\boldsymbol\xi\).  This gives
\[
 \|2h a_{\boldsymbol\xi}-t\|_2^2
 =\nu+4\|\mathbf A\boldsymbol\xi-\mathbf b\|_2^2.
\]
Thus every Boolean vector satisfying
\eqref{eq:intro-x3c-system} has squared distance \(\nu\), whereas every Boolean
vector violating at least one constraint in \eqref{eq:intro-x3c-system} has
squared distance at least \(\nu+4\).

\paragraph{Controlling every ideal element.}
The identity above applies only to the polynomials
\(a_{\boldsymbol\xi}\) encoding Boolean vectors.  To handle every
\(q\in R_d\), we place the construction in the even coordinates of a larger
cyclotomic ring by replacing \(x\) with \(y^2\).  The degree-\(<d\)
representative of \(q\) then decomposes uniquely as
\[
 q(y)=a(y^2)+yb(y^2).
\]
Here \(a\) and \(b\) are the integer polynomials formed by the even and odd
coefficients of \(q\), respectively.
A \emph{mask} \(\boldsymbol\mu\in\{0,1\}^n\) records which
coordinates of \(\boldsymbol\xi\) may equal one, and we set
\(u_{\boldsymbol\mu}(x):=\sum_k\mu_kx^{\alpha_k}\).  The basic X3C reduction
takes \(\boldsymbol\mu=\mathbf 1_n\).
For a sufficiently large integer \(P\), we set
\[
 \begin{aligned}
 g(y)&:=2h(y^2)+2Py,\\
 T_{\boldsymbol\mu}(y)&:=t(y^2)+Py\,u_{\boldsymbol\mu}(y^2),\\
 \Delta_{\boldsymbol\mu}&:=P^2\|\boldsymbol\mu\|_0+\nu.
 \end{aligned}
\]
Here \(\|\boldsymbol\mu\|_0\) denotes the number of nonzero
coordinates of \(\boldsymbol\mu\).
Separating \(gq-T_{\boldsymbol\mu}\) into its even- and
odd-degree coefficients gives
\[
 gq-T_{\boldsymbol\mu}
 =(2ha+2Pxb-t)(y^2)+y(2hb+2Pa-Pu_{\boldsymbol\mu})(y^2).
\]

The \(P\)-dependent part of the two displayed components has
squared norm
\(4P^2(\|b\|_2^2+\|a-u_{\boldsymbol\mu}/2\|_2^2)\).  Over integral
coefficients, this quantity is minimized exactly when \(b=0\) and
\(a=a_{\boldsymbol\xi}\) for some Boolean
\(\boldsymbol\xi\leq\boldsymbol\mu\), where the inequality is
coordinatewise.  Every other integral pair increases this
squared norm by at least \(4P^2\).  We choose \(P\) large enough that the
remaining terms cannot offset this increase, making every other \(q\) have
squared distance greater than \(\Delta_{\boldsymbol\mu}+4\).

For a Boolean vector
\(\boldsymbol\xi\leq\boldsymbol\mu\), the two components give
\[
 \|g a_{\boldsymbol\xi}(y^2)-T_{\boldsymbol\mu}\|_2^2
 =\Delta_{\boldsymbol\mu}
  +4\|\mathbf A\boldsymbol\xi-\mathbf b\|_2^2.
\]
Thus the elements of \((g)\) at squared distance at most
\(\Delta_{\boldsymbol\mu}\) from \(T_{\boldsymbol\mu}\) are exactly those
encoding solutions, and every other element has squared distance at least
\(\Delta_{\boldsymbol\mu}+4\).  The main technical step is extending this
distance separation from the Boolean encodings \(a_{\boldsymbol\xi}(y^2)\)
to every \(q\in R_d\).

Full rank follows because \(R_d\) is an integral domain and the generator is
nonzero.  Fourier orthogonality shows that squared distances under
the canonical embedding are \(d\) times those under the coefficient
embedding.

\paragraph{Cyclic and fixed-family extensions.}
For the cyclic case, we convert each constructed principal
cyclotomic ideal into a principal cyclic ideal of twice the dimension, while
multiplying the squared distances of corresponding elements by eight.  For the fixed families, we
use the matrix of all \(\binom m3\) triples and place the input collection only
in \(\boldsymbol\mu\).  The resulting principal cyclotomic and cyclic ideals
then depend only on \(m\).  The details appear in \cref{sec:cyclic-cvp} and
\cref{sec:fixed}, respectively.

\paragraph{Organization.} \Cref{sec:pre} reviews lattices and
ideal lattices.  \Cref{sec:compiler} proves \(\NP\)-completeness of exact
decision-CVP on principal cyclotomic and principal cyclic ideal lattices.
\Cref{sec:fixed} proves that exact decision-CVPP has no polynomial-time
solution on the resulting fixed principal-ideal families unless
\(\NP\subseteq\Ppoly\).
\Cref{sec:conclusion} summarizes the results.

\section{Preliminaries}
\label{sec:pre}
\paragraph{Notation.}
For \(r\geq1\), write \([r]:=\{1,\ldots,r\}\).  We use bold lowercase
Roman or Greek letters for vectors and bold uppercase letters for matrices.
For a vector \(\mathbf v\), its \(k\)-th coordinate is denoted by \(v_k\).
For a matrix \(\mathbf A\), its entry in row \(i\) and column \(j\) is denoted
by \(A_{ij}\).  Vector inequalities are coordinatewise.  For
\(1\leq p\leq\infty\),
\(\|\mathbf v\|_p\) is the usual \(\ell_p\)-norm, while
\(\|\mathbf v\|_0:=|\{i:v_i\neq0\}|\) is the
\emph{support size}.  We write
\(\mathbf 0_r\) and \(\mathbf 1_r\) for the length-\(r\) all-zero and
all-ones vectors.  For scalars, finite sets,
and binary strings, \(|\cdot|\) denotes absolute value, cardinality, and bit
length, respectively.

\subsection{Lattices}
\label{sec:simple-lattices}

For integers \(m\geq n\geq1\), let
\(\mathbf B\in\mathbb{R}^{m\times n}\) have linearly independent columns. The
\emph{lattice generated by \(\mathbf B\)} is defined by
\[
  \mathcal{L}(\mathbf B)=\mathbf B\mathbb{Z}^n
  =\{\mathbf B\mathbf z:\mathbf z\in\mathbb{Z}^n\}.
\]
The columns of $\mathbf B$ form a \emph{basis} of
$\mathcal{L}(\mathbf B)$. If $m=n$, then $\mathcal{L}(\mathbf B)$ is called
\emph{full rank}.
We write $\|\cdot\|_2$ for the Euclidean norm.

For a lattice \(\mathcal L\subseteq\mathbb R^m\) and
\(\mathbf t\in\mathbb R^m\), define
\[
  \operatorname{dist}(\mathbf t,\mathcal{L})
  =\min_{\mathbf v\in\mathcal{L}}\|\mathbf t-\mathbf v\|_2.
\]

\begin{definition}[Exact CVP]
Given a rational lattice basis $\mathbf B$, a rational target $\mathbf t$,
and a nonnegative rational squared threshold $\Delta$, exact decision-CVP
asks whether
\[
  \dist\bigl(\mathbf t,\mathcal L(\mathbf B)\bigr)^2\leq\Delta.
\]
Given $\mathbf B$ and $\mathbf t$, exact search-CVP asks for a vector
$\mathbf v\in\mathcal L(\mathbf B)$ satisfying
\[
  \|\mathbf t-\mathbf v\|_2
  =\dist\bigl(\mathbf t,\mathcal L(\mathbf B)\bigr).
\]
\end{definition}
The exact shortest vector problem (SVP) is the corresponding problem with
$\mathbf t=\mathbf 0_m$, except that the zero lattice vector is excluded.

\Needspace{10\baselineskip}
We use the following formulation of exact decision-CVPP, following
Micciancio~\cite{Micciancio2001}.

\begin{definition}[Exact decision-CVPP]
\label{def:cvpp}
An exact decision-CVPP scheme consists of a
\emph{preprocessing function} $\mathsf{Pre}$, a
\emph{uniform polynomial-time decoder} $\mathsf{Dec}$, and a
polynomial $p$ with the following properties.
\begin{itemize}
 \item Given a rational lattice basis $\mathbf B$, the
       \emph{offline stage} outputs
       a string $\pi_{\mathbf B}:=\mathsf{Pre}(\mathbf B)$ satisfying
       $|\pi_{\mathbf B}|\leq p(|\langle\mathbf B\rangle|)$, where
       $\langle\mathbf B\rangle$ is the standard binary encoding of
       $\mathbf B$.  No running-time bound is imposed on
       $\mathsf{Pre}$.
 \item Given $\pi_{\mathbf B}$, a rational target $\mathbf t$, and a
       nonnegative rational squared threshold $\Delta$, the
       \emph{online stage}
       satisfies
       \[
        \mathsf{Dec}(\pi_{\mathbf B},\mathbf t,\Delta)=1
        \quad\Longleftrightarrow\quad
        \dist(\mathbf t,\mathcal L(\mathbf B))^2\leq\Delta.
       \]
       Its running time is polynomial in the total length of these inputs.
\end{itemize}
We say that exact decision-CVPP is solvable in polynomial time
on a lattice family with one specified basis for each lattice if a single such
scheme works for all these bases.
\end{definition}

\subsection{Algebraic Number Theory and Ideal Lattices}
\label{sec:simple-ideal-lattices}

Let $K$ be a number field of degree $n$, and let
$R:=\mathcal{O}_K$ be its ring of integers.

\paragraph{Coefficient embedding.}
Fix an algebraic integer $\theta\in R$ such that
$K=\mathbb{Q}(\theta)$, and let $F(z)\in\mathbb{Z}[z]$ be the monic
minimal polynomial of $\theta$.
Then $K\cong\mathbb{Q}[z]/(F(z))$.
Every element $a\in K$ has a unique representation
\[
  a=a_0+a_1\theta+\cdots+a_{n-1}\theta^{n-1},
  \qquad a_0,\ldots,a_{n-1}\in\mathbb{Q}.
\]
Its \emph{coefficient embedding} with respect to the
\emph{power basis}
$1,\theta,\ldots,\theta^{n-1}$ is
\[
  \coeff(a)
  :=(a_0,a_1,\ldots,a_{n-1})^\top\in\mathbb{Q}^n.
\]
If $I\subseteq R$ is a nonzero ideal, then
\[
  \mathcal{L}(I)
  :=\{\coeff(a):a\in I\}
\]
is a full-rank lattice in $\mathbb{R}^n$. For a target $T\in K$, its
distance from $I$ under the coefficient embedding is
\[
  \operatorname{dist}\bigl(
    \coeff(T),\mathcal{L}(I)
  \bigr)
  :=
  \min_{a\in I}
  \|\coeff(T-a)\|_2.
\]

\paragraph{Canonical embedding.}
Let $\sigma_1,\ldots,\sigma_n:K\longrightarrow\mathbb{C}$ be all embeddings
of $K$ into $\mathbb{C}$. The \emph{canonical embedding} is defined by
\[
  \sigma(a)
  :=(\sigma_1(a),\ldots,\sigma_n(a))\in\mathbb{C}^n.
\]
Because the complex embeddings occur in conjugate pairs, $\sigma(K)$
lies in a real subspace of $\mathbb{C}^n$ of dimension $n$. If
$I\subseteq R$ is a nonzero ideal, then
\[
  \sigma(I):=\{\sigma(a):a\in I\}
\]
is a full-rank lattice in this real subspace. For $T\in K$, its distance
from $I$ under the canonical embedding is
\[
  \operatorname{dist}\bigl(\sigma(T),\sigma(I)\bigr)
  :=
  \min_{a\in I}\|\sigma(T-a)\|_2.
\]

In general, the coefficient and canonical embeddings induce different
Euclidean lengths. We now specialize to power-of-two cyclotomic fields,
for which the two lengths differ only by a fixed scaling factor.

\paragraph{The power-of-two cyclotomic case.}
\label{sec:simple-cyclotomic}

Let $d$ be a power of two and put $\zeta=e^{\pi i/d}$, a primitive
$2d$-th root of unity. We specialize to
\[
  K=K_d=\mathbb{Q}(\zeta),
  \qquad
  R=R_d=\mathcal{O}_{K_d}
       =\mathbb{Z}[\zeta]
       \cong\mathbb{Z}[y]/(y^d+1).
\]
In a quotient by a monic polynomial of degree \(r\), we identify
each element with its unique representative of degree less than \(r\).
Every $f\in K_d$ has a unique representation
\[
  f=f_0+f_1y+\cdots+f_{d-1}y^{d-1},
  \qquad f_0,\ldots,f_{d-1}\in\mathbb{Q},
\]
with coefficient vector
\[
  \coeff(f)
  =(f_0,\ldots,f_{d-1})^\top.
\]
The $d$ canonical embeddings of $K_d$ send $y$ to $\zeta^{2k+1}$ for
$k=0,\ldots,d-1$.
Accordingly, we write
\[
  \sigma_d(f)
  =\bigl(f(\zeta),f(\zeta^3),\ldots,f(\zeta^{2d-1})\bigr).
\]
For \(1\leq p\leq\infty\), we use
\(\|f\|_p:=\|\coeff(f)\|_p\) for the
\emph{coefficient \(\ell_p\)-norm} of
\(f\in K_d\).

We will use \emph{Young's convolution inequality} in the following form.
For every \(r\geq1\) and \(f,g\in\R[z]/(z^r+1)\), with coefficient norms
taken on their degree-\(<r\) representatives,
\[
 \|fg\|_2\leq\|f\|_1\|g\|_2.
\]
Indeed, multiplication by a monomial is a signed coordinate permutation, so
the inequality follows from the triangle inequality.

For \(g\in R_d\), write
\((g):=gR_d=\{gq:q\in R_d\}\) for the principal ideal generated by \(g\).
We write \(I\lhd R\) to mean that \(I\) is an ideal of \(R\).
For $0\neq g\in R_d$, let
\[
  \mathbf B_g
  :=\bigl[\coeff(g)\ \coeff(yg)\ \cdots\ \coeff(y^{d-1}g)\bigr].
\]
Since $R_d$ is a domain, these columns form a $\mathbb Z$-basis of
$\mathcal L((g))$. We refer to $\mathbf B_g$ as its
\emph{multiplication basis}.

\begin{lemma}[Equivalence of coefficient and canonical norms]
\label{lem:canonical}
For every $f\in K_d$,
\[
  \|\sigma_d(f)\|_2^2
  =d\|\coeff(f)\|_2^2.
\]
Consequently, for every nonzero ideal $I\subseteq R_d$ and every target
$T\in K_d$,
\[
  \operatorname{dist}\bigl(\sigma_d(T),\sigma_d(I)\bigr)^2
  =
  d\operatorname{dist}\bigl(
    \coeff(T),\mathcal{L}(I)
  \bigr)^2.
\]
Therefore, exact CVP and SVP under the two embeddings are
equivalent after multiplying each squared threshold by $d$.
\end{lemma}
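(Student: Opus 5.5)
We prove the norm identity by a direct Fourier (Parseval) computation, and then deduce the distance statement and the equivalence of the two CVP/SVP formulations formally. Write $f=\sum_{j=0}^{d-1}f_jy^j$ with $f_j\in\Q$, and set $\omega_k:=\zeta^{2k+1}$ for $k=0,\ldots,d-1$. Then
\[
 \|\sigma_d(f)\|_2^2=\sum_{k=0}^{d-1}|f(\omega_k)|^2
 =\sum_{j,l=0}^{d-1}f_jf_l\sum_{k=0}^{d-1}\omega_k^{\,j-l},
\]
using that the $f_j$ are real, so $\overline{f(\omega_k)}=\sum_l f_l\omega_k^{-l}$.

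The key step is the orthogonality relation
\[
 \sum_{k=0}^{d-1}\zeta^{(2k+1)s}=\zeta^{s}\sum_{k=0}^{d-1}(\zeta^{2s})^k,
 \qquad -(d-1)\le s\le d-1 .
\]
For $s=0$ this sum equals $d$. For $s\neq0$ we have $0<|s|<d$, so $\zeta^{2s}=e^{2\pi i s/d}\neq1$ while $(\zeta^{2s})^d=1$. The geometric series therefore vanishes. Substituting $s=j-l$ leaves only the diagonal terms, which gives $\|\sigma_d(f)\|_2^2=d\sum_j f_j^2=d\|\coeff(f)\|_2^2$. The only point needing care is the range of $s$: it must stay strictly inside $(-d,d)$, and this is exactly why we use degree-$<d$ representatives. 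This is the main step, and it is routine.

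For the distance statement, apply the identity to $f=T-a$ for each $a\in I$. Since both embeddings are additive, $\|\sigma_d(T)-\sigma_d(a)\|_2^2=d\|\coeff(T)-\coeff(a)\|_2^2$ for every $a$. Minimizing over $a\in I$ on both sides gives the claimed equality of squared distances. Both minima are attained because $\mathcal L(I)$ and $\sigma_d(I)$ are lattices, and in any case the correspondence $a\mapsto a$ preserves the ordering of distances up to the factor $d$.

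Consequently, a decision-CVP instance $(I,T,\Delta)$ under the coefficient embedding is a YES instance if and only if $(I,T,d\Delta)$ is a YES instance under the canonical embedding. Moreover, the minimizers $a$ coincide, so search-CVP transfers as well. SVP is handled by the same argument with $T=0$ and $a\neq0$.
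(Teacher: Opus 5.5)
Your proposal is correct and matches the paper's proof: you expand $\|\sigma_d(f)\|_2^2$ as a double sum, use the orthogonality relation $\sum_{k=0}^{d-1}\zeta^{(2k+1)(r-s)}$ (equal to $d$ for $r=s$ and $0$ otherwise when $|r-s|<d$), and then obtain the distance identity by applying this to $T-a$ and minimizing over $a\in I$. Your explicit geometric-series justification of the orthogonality, with $\zeta^{2s}$ a nontrivial $d$-th root of unity, supplies a detail the paper only states.
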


\begin{proof}
For $0\leq r,s<d$, Fourier orthogonality gives
\[
  \sum_{k=0}^{d-1}\zeta^{(2k+1)(r-s)}
  =
  \begin{cases}
    d,&r=s,\\
    0,&r\neq s.
  \end{cases}
\]
Expanding the squared canonical norm therefore yields
\[
\begin{aligned}
  \|\sigma_d(f)\|_2^2
  &=
  \sum_{r,s=0}^{d-1}
  f_r\overline{f_s}
  \sum_{k=0}^{d-1}\zeta^{(2k+1)(r-s)} \\
  &=d\sum_{r=0}^{d-1}|f_r|^2 \\
  &=d\|\coeff(f)\|_2^2.
\end{aligned}
\]
The remaining statements follow because the canonical embedding scales
every Euclidean length in the coefficient embedding by $\sqrt{d}$.
\end{proof}

\subsection{Exact Cover by 3-Sets (X3C).}
Let \(\mathcal U=\{u_1,\ldots,u_m\}\), where \(m\) is a positive
multiple of \(3\), and let
$\mathcal{C}=\{C_1,\ldots,C_n\}$ be a collection of $3$-element subsets of
\(\mathcal U\) (called \emph{triples}).  Write $n=|\mathcal C|$.  The \emph{incidence
matrix} of the instance is $\mathbf{H} \in \{0,1\}^{m \times n}$ defined by
\[
H_{ij} =
\begin{cases}
1 & \text{if }u_i\in C_j,\\[2pt]
0 & \text{otherwise.}
\end{cases}
\]
An \emph{exact cover} is a vector $\boldsymbol{\xi} \in \{0,1\}^n$ satisfying
the binary linear system
\[
\mathbf{H}\boldsymbol{\xi} = \mathbf{1}_m,
\]
where $\mathbf{1}_m \in \{0,1\}^m$ denotes the all-ones vector.
X3C asks whether such an exact cover exists.
Equivalently, the selected subfamily
$\mathcal{C}' = \{C_j : \xi_j = 1\} \subseteq \mathcal{C}$ must be pairwise
disjoint and cover every element of \(\mathcal U\) exactly once:
\[
\bigcup_{C_j \in \mathcal{C}'} C_j = \mathcal U,
\qquad
C_j \cap C_{j'} = \varnothing
\;\text{ for all distinct } C_j, C_{j'} \in \mathcal{C}'.
\]

It is well known that X3C is \(\NP\)-complete~\cite{GareyJohnson}.

\section{Reduction from X3C to CVP on principal ideal lattices}
\label{sec:compiler}
We use the following masked generalization of X3C.
For the general masked system below, \(M\) denotes the number of equations.
In the X3C specialization, \(M=m\).
Let
\(M,n\geq1\), and consider the \emph{masked Boolean system}
\begin{equation}
 \mathbf A\boldsymbol\xi=\mathbf b,
 \qquad
 \boldsymbol\xi\in\{0,1\}^n,
 \qquad
 \boldsymbol\xi\leq\boldsymbol\mu,
 \label{eq:masked-system}
\end{equation}
where \(\mathbf A\in\{0,1\}^{M\times n}\),
\(\mathbf b\in\{0,1\}^M\), and
\(\boldsymbol\mu\in\{0,1\}^n\).

In this paper, we call \(\boldsymbol\mu\) the \emph{mask}.  The
constraint \(\boldsymbol\xi\leq\boldsymbol\mu\) permits \(\xi_k=1\) only when
\(\mu_k=1\).  The feasibility problem for masked Boolean systems is
\(\NP\)-complete,
since X3C is the special case obtained by setting
\[
 (\mathbf A,\mathbf b,\boldsymbol\mu)
 = (\mathbf H,\mathbf 1_m,\mathbf 1_n).
\]

We reduce this feasibility problem to exact decision-CVP on
principal ideal lattices.  Roughly speaking, for each Boolean vector
\(\boldsymbol\xi\leq\boldsymbol\mu\), we form a
sparse polynomial \(a_{\boldsymbol\xi}\) and construct a principal ideal
\((g)\), a target \(T_{\boldsymbol\mu}\), and a threshold
\(\Delta_{\boldsymbol\mu}\) with
\[
 \|g a_{\boldsymbol\xi}(y^2)-T_{\boldsymbol\mu}\|_2^2
 =\Delta_{\boldsymbol\mu}
  +4\|\mathbf A\boldsymbol\xi-\mathbf b\|_2^2.
\]
We then add coefficients at odd powers of \(y\) so that every ideal element \(gq\) for which \(q\) is not of the form \(a_{\boldsymbol\xi}(y^2)\) for any
\(\boldsymbol\xi\in\{0,1\}^n\) satisfying
\(\boldsymbol\xi\leq\boldsymbol\mu\) has squared distance from
\(T_{\boldsymbol\mu}\) greater than \(\Delta_{\boldsymbol\mu}+4\).  Thus
the closest squared distance is \(\Delta_{\boldsymbol\mu}\) for a YES
instance and at least \(\Delta_{\boldsymbol\mu}+4\) for a NO instance.

\subsection{Encoding the Boolean equations}

We first choose coefficient positions \(\alpha_1,\ldots,\alpha_n\) whose nonzero ordered differences are pairwise distinct.

\begin{lemma}
\label{lem:sidon}
For every \(n\geq1\), one can compute in polynomial time integers
\[
 0=\alpha_1<\cdots<\alpha_n,
 \qquad
 \alpha_n=\Theta(n^2)\quad(n\geq2),
\]
whose nonzero ordered differences \(\alpha_k-\alpha_j\) are pairwise distinct.
\end{lemma}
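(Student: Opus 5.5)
We use the classical Erdős--Turán construction. Let \(p\) be the smallest prime with \(p\geq n\); by Bertrand's postulate \(n\leq p<2n\), and trial division finds \(p\) in time polynomial in \(n\). For \(k=1,\ldots,n\) we set
\[
 \beta_k:=2p(k-1)+\bigl((k-1)^2\bmod p\bigr).
\]
The \(\beta_k\) are strictly increasing, because the first term grows by \(2p\) at each step while the residue lies in \([0,p)\). We have \(\beta_1=0\) and \(\beta_n<2pn\leq4n^2\), and \(\beta_n\geq2p(n-1)\geq2n(n-1)\), so \(\beta_n=\Theta(n^2)\) for \(n\geq2\). We take \(\alpha_k:=\beta_k\). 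Computing them takes polynomial time.

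Next we show that the nonzero ordered differences are distinct. Two distinct nonzero differences \(\alpha_k-\alpha_j=\alpha_{k'}-\alpha_{j'}\) are the same as a nontrivial solution of \(\alpha_k+\alpha_{j'}=\alpha_{k'}+\alpha_j\), with \(\{k,j'\}\neq\{k',j\}\) as multisets. (The case \(k=j'\) and \(j=k'\) is excluded because it would give \(\alpha_k-\alpha_j=\alpha_j-\alpha_k\), forcing the difference to be zero.) So it suffices to prove the Sidon property: if \(\alpha_a+\alpha_b=\alpha_c+\alpha_e\), then \(\{a,b\}=\{c,e\}\).

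Write \(x=a-1\), \(y=b-1\), \(z=c-1\), \(w=e-1\), all in \([0,p)\), and let \(r(t):=t^2\bmod p\in[0,p)\). Then
\[
 2p(x+y)+r(x)+r(y)=2p(z+w)+r(z)+r(w).
\]
Since \(0\leq r(x)+r(y)<2p\), taking the equation modulo \(2p\) and comparing sizes gives \(x+y=z+w\) and \(r(x)+r(y)=r(z)+r(w)\) as integers. Reducing the second identity modulo \(p\) gives \(x^2+y^2\equiv z^2+w^2\pmod p\). Combined with \(x+y=z+w\), this yields \(xy\equiv zw\pmod p\). Therefore \(\{x,y\}\) and \(\{z,w\}\) are the root multisets of the same monic quadratic over \(\mathbb F_p\). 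Because \(\mathbb F_p\) is a field, these multisets agree modulo \(p\), and since all four values lie in \([0,p)\) they agree as integers.

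The only delicate step is separating the two scales, namely deducing \(x+y=z+w\) exactly rather than only modulo \(p\). This is why we use the factor \(2p\) instead of \(p\). The rest is routine.
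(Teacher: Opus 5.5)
Your proposal is correct and uses the same Erd{\H{o}}s--Tur{\'a}n construction $\alpha_j=2p(j-1)+\bigl((j-1)^2\bmod p\bigr)$ as the paper; the paper only cites its Sidon property, whereas you prove it directly by separating the $2p$-scale term from the residues. One thing needs tidying: your step to $xy\equiv zw\pmod p$ implicitly divides by $2$ and so needs $p$ odd, but the smallest prime $p\geq n$ is $2$ when $n\leq2$ (and $p<2n$ also fails at $n=1$), so either take an odd prime in $[n,2n)$ as the paper does or handle $n\leq2$ separately, where the claim is trivial.
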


\begin{proof}
For \(n=1\), take \(\alpha_1=0\).  Otherwise we can always choose an odd prime
\(p\in[n,2n)\) by Bertrand's
postulate. Let \(r_j\) be the least residue of \((j-1)^2\) modulo
\(p\), and set
\[
 \alpha_j:=2p(j-1)+r_j.
\]
This is exactly the classical Erd{\H{o}}s--Tur{\'a}n Sidon
construction~\cite{ErdosTuran41}, which ensures all nonzero
ordered differences \(\alpha_k-\alpha_j\) are pairwise distinct.
The construction runs in polynomial time.  Finally,
\(p<2n\) and \(r_n<p\) give \(\alpha_n=O(n^2)\), while \(p\geq n\) gives
\[
\alpha_n\geq2p(n-1)\geq2n(n-1).
\]
Hence \(\alpha_n=\Theta(n^2)\) for \(n\geq2\).
\end{proof}

Denote
\[
 c:=\alpha_n, \quad L:=2c+1,\quad \rho_i:=c+(i-1)L\quad(i\in[M]).
\]
We call \(\rho_i\) the \emph{row coordinate} of row \(i\).  For
\(A_{ij}=1\) and \(k\neq j\), the position
\(\rho_i-\alpha_j+\alpha_k\) is a
\emph{cross-term coordinate}.  Set
\[
 W:=\sum_{i\in[M]}\sum_{j\in[n]}A_{ij},
 \qquad
 \nu:=(n-1)W.
\]

Let \(N\) be the least power of two greater than \(\rho_M+c\), and
work in \(R_N^-:=\Z[x]/(x^N+1)\).  We use degree-\(<N\) representatives
and their coefficient norms.  Define
\begin{align*}
 h(x)&:=\sum_{A_{ij}=1}x^{\rho_i-\alpha_j},\\
 t(x)&:=2\sum_{i=1}^M b_i x^{\rho_i}
       +\sum_{\substack{A_{ij}=1\\k\neq j}}
          x^{\rho_i-\alpha_j+\alpha_k},\\
 a_{\boldsymbol\xi}(x)&:=\sum_{k=1}^n\xi_kx^{\alpha_k}.
\end{align*}

The construction satisfies the following identity.
\begin{lemma}
\label{lem:boolean-identity}
For every \(\boldsymbol\xi\in\{0,1\}^n\),
\[
 \|2h a_{\boldsymbol\xi}-t\|_2^2
 =\nu+4\|\mathbf A\boldsymbol\xi-\mathbf b\|_2^2.
\]
\end{lemma}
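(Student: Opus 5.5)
Expand the product directly and then sort coefficient positions into classes. Since every exponent \(\rho_i-\alpha_j+\alpha_k\) satisfies \(0\leq\rho_i-\alpha_j+\alpha_k\leq\rho_M+c<N\), no reduction modulo \(x^N+1\) occurs. Hence \(2ha_{\boldsymbol\xi}\) is the honest polynomial product
\[
 2ha_{\boldsymbol\xi}=2\sum_{A_{ij}=1}\sum_{k=1}^n\xi_kx^{\rho_i-\alpha_j+\alpha_k}.
\]
I will split this sum into diagonal terms (\(k=j\)) and cross terms (\(k\neq j\)).

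\textbf{Combinatorial separation.} Three facts are needed.

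(i) Every cross-term exponent \(\rho_i-\alpha_j+\alpha_k\) with \(k\neq j\) differs from every row coordinate. Indeed \(|\alpha_k-\alpha_j|\leq c\) and \(\alpha_k\neq\alpha_j\), so the exponent lies in \([\rho_i-c,\rho_i+c]\setminus\{\rho_i\}\). The windows \([\rho_i-c,\rho_i+c]\) for different \(i\) are disjoint because \(L=2c+1\); this forces the exponent away from every other \(\rho_{i'}\) as well.

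(ii) The cross-term exponents are pairwise distinct over all triples \((i,j,k)\) with \(A_{ij}=1\) and \(k\neq j\). The disjoint windows make distinct \(i\) give distinct exponents. For fixed \(i\), an equality \(\alpha_k-\alpha_j=\alpha_{k'}-\alpha_{j'}\) of nonzero differences forces \((j,k)=(j',k')\) by Lemma~\ref{lem:sidon}.

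(iii) Consequently the number of cross-term coordinates equals the number of triples, namely \(W(n-1)=\nu\).

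\textbf{Coefficient comparison.} The coefficient of \(2ha_{\boldsymbol\xi}-t\) at \(x^{\rho_i}\) is \(2\sum_jA_{ij}\xi_j-2b_i=2(\mathbf A\boldsymbol\xi-\mathbf b)_i\). At the cross-term coordinate indexed by \((i,j,k)\), the coefficient is \(2\xi_k-1\in\{\pm1\}\), since by (ii) exactly one product term and exactly one target term land there. At every other position both polynomials vanish. Summing squares then gives \(4\|\mathbf A\boldsymbol\xi-\mathbf b\|_2^2+\nu\).

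The only delicate point is (ii). In particular, I must check the windows argument for \(i\neq i'\): a cross exponent for row \(i\) lies within distance \(c\) of \(\rho_i\), and \(\rho_{i'}-\rho_i\) is at least \(L>2c\) in absolute value, so the two rows' exponents cannot coincide. Everything else is routine bookkeeping.
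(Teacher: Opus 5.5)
Your proposal is correct and takes essentially the same approach as the paper. Like the paper, you expand $2ha_{\boldsymbol\xi}$ with no wrap-around, split it into diagonal ($k=j$) and cross ($k\neq j$) terms, and use the disjoint windows $[\rho_i-c,\rho_i+c]$ together with the Sidon property to show that the $\nu$ cross-term coordinates are pairwise distinct and avoid every row coordinate, which yields $4\|\mathbf A\boldsymbol\xi-\mathbf b\|_2^2+\nu$.
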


\begin{proof}
Note that
\[
2h(x)a_{\boldsymbol{\xi}}(x)
= 2\sum_{A_{ij}=1}\sum_{k=1}^n \xi_k x^{\rho_i - \alpha_j + \alpha_k}.
\]
Splitting the inner sum according to $k=j$ and $k\neq j$ yields
\begin{equation}
2h(x)a_{\boldsymbol{\xi}}(x)
= 2\sum_{A_{ij}=1} \xi_j x^{\rho_i}
+ 2\sum_{\substack{A_{ij}=1\\k\neq j}} \xi_k x^{\rho_i - \alpha_j + \alpha_k}.
\label{equ::ha}
\end{equation}

Subtracting $t(x)$ from~(\ref{equ::ha}) gives 
\begin{equation}
2h(x)a_{\boldsymbol{\xi}}(x) - t(x)
= 2\sum_{i=1}^M\Bigl(\sum_{j:A_{ij}=1}\xi_j - b_i\Bigr)x^{\rho_i}
+ \sum_{\substack{A_{ij}=1\\k\neq j}}(2\xi_k - 1)x^{\rho_i - \alpha_j + \alpha_k}.
\label{equ::sol}
\end{equation}
Because $(\mathbf{A}\boldsymbol{\xi})_i=\sum_{j:A_{ij}=1}\xi_j$, the coefficient of $x^{\rho_i}$ in~(\ref{equ::sol}) is precisely $2((\mathbf{A}\boldsymbol{\xi})_i-b_i)$.

To compute the \(\ell_2\)-norm of~\eqref{equ::sol}, we must verify that all its monomials have distinct exponents.
The row coordinates \(\rho_i\) are pairwise distinct because \(L>0\).

\textit{(i) Cross-term coordinates do not equal row
coordinates.}
A row coordinate is some
$\rho_{i'}=c+(i'-1)L$. A cross-term coordinate is
$\rho_i-\alpha_j+\alpha_k$ with $A_{ij}=1$ and $k\neq j$. Equality of the two
would require $\rho_{i'}-\rho_i=\alpha_k-\alpha_j$. If $i'=i$ this forces
$\alpha_j=\alpha_k$, hence $j=k$, contradicting $k\neq j$. If $i'\neq i$
then $|\rho_{i'}-\rho_i|\ge L=2c+1$, whereas $|\alpha_k-\alpha_j|\le c$
because $0\le\alpha_j,\alpha_k\le c$; thus equality is impossible.

\textit{(ii) Cross-term coordinates for a fixed row are
distinct.}
Fix $i$ and suppose
\[
\rho_i-\alpha_j+\alpha_k = \rho_i-\alpha_{j'}+\alpha_{k'}
\quad\text{with }k\neq j,\; k'\neq j'.
\]
Cancelling $\rho_i$ gives $\alpha_k-\alpha_j=\alpha_{k'}-\alpha_{j'}$. By
Lemma~\ref{lem:sidon} all nonzero ordered differences $\alpha_k-\alpha_j$ are distinct,
so $(j,k)=(j',k')$.

\textit{(iii) Cross-term coordinates for different rows are distinct.}
All cross-term coordinates for row \(i\) lie in
\([\rho_i-c,\rho_i+c]=[(i-1)L,iL-1]\).  These intervals are disjoint for
distinct rows.

Since $N$ is a power of two larger than $\rho_M+c$, all exponents lie in
$[0,N)$ and no wrap-around modulo $x^N+1$ occurs. Consequently, the two sums
in~\eqref{equ::sol} have disjoint supports. The second sum contains exactly
$\nu=(n-1)W$ monomials, one for each triple $(i,j,k)$ with $A_{ij}=1$ and
$k\neq j$. Because $\xi_k\in\{0,1\}$, each coefficient satisfies
$(2\xi_k-1)^2=1$. Therefore
\begin{align*}
\|2ha_{\boldsymbol{\xi}}-t\|_2^2
&= \sum_{i=1}^M\bigl[2((\mathbf{A}\boldsymbol{\xi})_i-b_i)\bigr]^2
+ \sum_{\substack{A_{ij}=1\\k\neq j}}(2\xi_k-1)^2 \\[2pt]
&= 4\|\mathbf{A}\boldsymbol{\xi}-\mathbf{b}\|_2^2 + \nu.
\end{align*}
This completes the proof. 
\end{proof}

\subsection{Controlling every ideal element}

Lemma~\ref{lem:boolean-identity} applies only to the
polynomials \(a_{\boldsymbol\xi}\) encoding Boolean vectors.  We now add
coefficients at odd powers of \(y\) so that every \(q\in R_d\) not of the
form \(a_{\boldsymbol\xi}(y^2)\) with
\(\boldsymbol\xi\leq\boldsymbol\mu\) satisfies
\(\|gq-T_{\boldsymbol\mu}\|_2^2>\Delta_{\boldsymbol\mu}+4\).

\begin{lemma}
The polynomial \(h(x)=\sum_{A_{ij}=1}x^{\rho_i-\alpha_j}\) has exactly
\(W\) distinct monomials.  Equivalently, the exponents
\(\rho_i-\alpha_j\) with \(A_{ij}=1\) are pairwise distinct.
\end{lemma}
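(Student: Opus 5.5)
The plan is to prove that the map \((i,j)\mapsto\rho_i-\alpha_j\) is injective on the index set \(\{(i,j):A_{ij}=1\}\). Since this set has exactly \(W\) elements, injectivity shows that \(h\) has \(W\) distinct monomials. We also check that no reduction modulo \(x^N+1\) merges or cancels terms. By definition \(\rho_i-\alpha_j\ge \rho_1-c=0\) and \(\rho_i-\alpha_j\le\rho_M<N\). Hence every exponent already lies in \([0,N)\), every coefficient is \(+1\), and the degree-\(<N\) representative of \(h\) is the literal sum.

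For injectivity, suppose \(\rho_i-\alpha_j=\rho_{i'}-\alpha_{j'}\) with \(A_{ij}=A_{i'j'}=1\). I would use the same interval argument as in part (iii) of the proof of Lemma~\ref{lem:boolean-identity}. Since \(0\le\alpha_j\le c\), the exponent \(\rho_i-\alpha_j\) lies in \([\rho_i-c,\rho_i]=[(i-1)L,(i-1)L+c]\). These intervals are pairwise disjoint for distinct \(i\) because \(L=2c+1>c\). Therefore \(i=i'\). Cancelling \(\rho_i\) then gives \(\alpha_j=\alpha_{j'}\). The \(\alpha\)'s are strictly increasing by Lemma~\ref{lem:sidon}, so \(j=j'\).

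I expect no real obstacle here. The only point needing care is the bookkeeping that all exponents lie in \([0,N)\), so that the negacyclic sign and wrap-around never come into play. This follows from \(c=\alpha_n\ge\alpha_j\ge0\) and from the choice of \(N>\rho_M+c\).
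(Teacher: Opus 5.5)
Your proof is correct and follows essentially the same route as the paper. Your disjoint-interval step \([(i-1)L,(i-1)L+c]\) is just a repackaging of the paper's comparison \(|\rho_{i'}-\rho_i|\geq L=2c+1\) versus \(|\alpha_{j'}-\alpha_j|\leq c\), and it is followed, as in the paper, by strict monotonicity of the \(\alpha_j\) and the no-wrap-around check \(0\leq\rho_i-\alpha_j\leq\rho_M<N\).
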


\begin{proof}
Suppose \((i,j)\neq(i',j')\), \(A_{ij}=A_{i'j'}=1\), and
\[
\rho_i-\alpha_j=\rho_{i'}-\alpha_{j'}.
\]

\textit{Case 1: \(i=i'\).}
Then \(\alpha_j=\alpha_{j'}\), so strict monotonicity of the
\(\alpha_j\)'s gives \(j=j'\), a contradiction.

\textit{Case 2: \(i\neq i'\).}
The spacing of the row coordinates gives
\[
|\rho_{i'}-\rho_i|\geq L=2c+1,
\qquad
|\alpha_{j'}-\alpha_j|\leq c,
\]
which is incompatible with
\(\rho_{i'}-\rho_i=\alpha_{j'}-\alpha_j\).

Finally, \(0\leq\rho_i-\alpha_j\leq\rho_M<N\), so no wrap-around occurs
in \(R_N^-\).  Hence all \(W\) exponents are distinct.
\end{proof}

Since these \(W\) monomials have unit coefficients, \(\|h\|_1=W\).  Set
\[
 s:=\lceil\sqrt{n+4}\rceil,
 \qquad
 P:=1+s\bigl(sW+\lceil\|t\|_2\rceil
                    +\lceil\sqrt{\nu+4}\rceil\bigr).
\]
Let \(d:=2N\) and identify \(R_N^-\) with the even-coordinate subring of
\(R_d=\Z[y]/(y^d+1)\) by \(x=y^2\).  For a mask
\(\boldsymbol\mu\in\{0,1\}^n\), put
\[
 u_{\boldsymbol\mu}(x):=\sum_{k=1}^n\mu_kx^{\alpha_k}
\]
and define
\begin{align}
 g(y)&:=2h(y^2)+2Py,                                      \label{eq:generator}\\
 T_{\boldsymbol\mu}(y)&:=t(y^2)+Py\,u_{\boldsymbol\mu}(y^2),
                                                             \label{eq:masked-target}\\
 \Delta_{\boldsymbol\mu}&:=P^2\|\boldsymbol\mu\|_0+\nu.
\label{eq:masked-radius}
\end{align}

Consider the principal ideal generated by $g$. Every \(q\in R_d\) has a unique form
\(q=a(y^2)+yb(y^2)\) with \(a,b\in R_N^-\).  Direct multiplication gives
\begin{equation}
\label{eq:block-product}
 gq-T_{\boldsymbol\mu}
 =(2ha+2Pxb-t)(y^2)+y(2hb+2Pa-Pu_{\boldsymbol\mu})(y^2),
\end{equation}
where both \(2ha+2Pxb-t\) and
\(2hb+2Pa-Pu_{\boldsymbol\mu}\) belong to \(R_N^-\).

\begin{lemma}
\label{lem:admissible}
If \(\boldsymbol\xi\in\{0,1\}^n\) and
\(\boldsymbol\xi\leq\boldsymbol\mu\), then
\begin{equation}
\label{eq:allowed-distance}
 \|g a_{\boldsymbol\xi}(y^2)-T_{\boldsymbol\mu}\|_2^2
 =\Delta_{\boldsymbol\mu}
  +4\|\mathbf A\boldsymbol\xi-\mathbf b\|_2^2.
\end{equation}
Every $q\in R_d$ not of the form \(a_{\boldsymbol\xi}(y^2)\) for some
\(\boldsymbol\xi\in\{0,1\}^n\) with
\(\boldsymbol\xi\leq\boldsymbol\mu\) satisfies
\[
 \|gq-T_{\boldsymbol\mu}\|_2^2
 >\Delta_{\boldsymbol\mu}+4.
\]
\end{lemma}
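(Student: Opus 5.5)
The proof works with the even/odd splitting \eqref{eq:block-product}. For $q=a(y^2)+yb(y^2)$, write
\[
E:=2ha+2Pxb-t,\qquad O:=2hb+2Pa-Pu_{\boldsymbol\mu}.
\]
The even and odd coordinates of $R_d$ are disjoint, so $\|gq-T_{\boldsymbol\mu}\|_2^2=\|E\|_2^2+\|O\|_2^2$. Both claims will be read off from this identity.

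\emph{Step 1 (the equality \eqref{eq:allowed-distance}).} Take $a=a_{\boldsymbol\xi}$ and $b=0$. Then $E=2ha_{\boldsymbol\xi}-t$, and Lemma~\ref{lem:boolean-identity} gives $\|E\|_2^2=\nu+4\|\mathbf A\boldsymbol\xi-\mathbf b\|_2^2$. Also $O=P(2a_{\boldsymbol\xi}-u_{\boldsymbol\mu})$, whose coefficient at $x^{\alpha_k}$ is $P(2\xi_k-\mu_k)$. Since $\boldsymbol\xi\le\boldsymbol\mu$, this coefficient is $\pm P$ when $\mu_k=1$ and $0$ when $\mu_k=0$. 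The exponents $\alpha_k$ are distinct and lie in $[0,N)$, so $\|O\|_2^2=P^2\|\boldsymbol\mu\|_0$. Adding the two gives \eqref{eq:allowed-distance}.

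\emph{Step 2 (integrality gap for the $P$-part).} For arbitrary $a,b\in R_N^-$ set $\beta:=\|b\|_2$, $\gamma:=\|a-u_{\boldsymbol\mu}/2\|_2$ and $r:=\sqrt{\beta^2+\gamma^2}$. Multiplication by $x$ is a signed coordinate permutation, so the $P$-dependent part $(2Pxb,\,2P(a-u_{\boldsymbol\mu}/2))$ has norm $2Pr$. Now look at $\gamma^2$ coefficientwise:
\begin{itemize}
\item at $x^{\alpha_k}$ with $\mu_k=1$, the coefficient is an integer minus $1/2$, so its square is at least $1/4$, with equality iff $a_{\alpha_k}\in\{0,1\}$;
\item every other coefficient is an integer, so its square is at least $0$, with equality iff it vanishes.
\end{itemize}
Hence $r^2\ge\|\boldsymbol\mu\|_0/4$, with equality exactly when $b=0$ and $a=a_{\boldsymbol\xi}$ for some Boolean $\boldsymbol\xi\le\boldsymbol\mu$. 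Any violation raises some coefficient square from $1/4$ to at least $9/4$, or from $0$ to at least $1$. So every excluded $q$ has $r^2\ge\|\boldsymbol\mu\|_0/4+1$, i.e.\ $2r\ge\sqrt{\|\boldsymbol\mu\|_0+4}$.

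\emph{Step 3 (absorbing the remaining terms).} View $(E,O)$ as the $P$-part plus the perturbation $(2ha-t,\,2hb)$. Young's inequality gives $\|2hb\|_2\le 2W\beta$. For the other component use $\|a\|_2\le\gamma+\sqrt n/2$, which gives $\|2ha\|_2\le 2W\gamma+W\sqrt n$. The triangle inequality in the combined space then yields
\[
\sqrt{\|E\|_2^2+\|O\|_2^2}\;\ge\;2(P-W)\,r-W\sqrt n-\|t\|_2 .
\]
The right side is increasing in $r$, so it suffices to check the inequality at $2r=\sqrt{\|\boldsymbol\mu\|_0+4}$. We need
\[
(P-W)\sqrt{\|\boldsymbol\mu\|_0+4}-W\sqrt n-\|t\|_2>\sqrt{P^2\|\boldsymbol\mu\|_0+\nu+4}.
\]
Bound the right side by $P\sqrt{\|\boldsymbol\mu\|_0}+\sqrt{\nu+4}$. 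Use
\[
\sqrt{\|\boldsymbol\mu\|_0+4}-\sqrt{\|\boldsymbol\mu\|_0}\ \ge\ \frac{4}{2\sqrt{n+4}}\ \ge\ \frac{2}{s},
\]
and bound both $W\sqrt{\|\boldsymbol\mu\|_0+4}$ and $W\sqrt n$ by $sW$. It then suffices that
\[
\frac{2P}{s}>2sW+\|t\|_2+\sqrt{\nu+4}.
\]
This follows from the definition of $P$: since $P\ge1+s(sW+\lceil\|t\|_2\rceil+\lceil\sqrt{\nu+4}\rceil)$, we get $2P/s>2sW+2\|t\|_2+2\sqrt{\nu+4}$, which is more than needed.

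The main obstacle is Step 3. The perturbation $2ha$ is unbounded in $a$, so it cannot be treated as a constant. The fix is to bound it by $\gamma+\sqrt n/2$, which makes its growth linear in $r$ with slope $2W$. That slope is dominated by $2P$, so monotonicity in $r$ reduces everything to the single boundary case $2r=\sqrt{\|\boldsymbol\mu\|_0+4}$. The remaining work is a careful check of the constants in $P$.
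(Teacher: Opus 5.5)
Your proposal is correct and follows essentially the same route as the paper. The shared steps are the even/odd split, the integrality gap $\|b\|_2^2+\|a-u_{\boldsymbol\mu}/2\|_2^2\ge(\|\boldsymbol\mu\|_0+4)/4$ for every excluded $q$, Young's inequality with $\|h\|_1=W$ plus the reverse triangle inequality to get a lower bound of the form $2(P-W)r-(\cdot)-\|t\|_2$, and the same constant check via $\sqrt{\omega+4}-\sqrt{\omega}\ge 2/s$ and the definition of $P$. The differences are cosmetic and fit within the slack built into $P$: you bound $\|u_{\boldsymbol\mu}/2\|_2$ by $\sqrt n/2$ rather than $\sqrt{\omega}/2$, and you phrase the substitution of the lower bound on $r$ as a monotonicity argument.
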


\begin{proof}
Let \(\omega:=\|\boldsymbol\mu\|_0\).

\textit{Case 1: $q=a_{\boldsymbol\xi}(y^2)$ for a Boolean vector
\(\boldsymbol\xi\leq\boldsymbol\mu\).}
Equation~\eqref{eq:block-product} gives
\[
 gq-T_{\boldsymbol\mu}
 =\bigl(2ha_{\boldsymbol\xi}-t\bigr)(y^2)
  +Py\bigl(2a_{\boldsymbol\xi}-u_{\boldsymbol\mu}\bigr)(y^2).
\]

\paragraph{The even block.}
The first summand in the displayed expression for
\(gq-T_{\boldsymbol\mu}\) is
\(\bigl(2ha_{\boldsymbol\xi}-t\bigr)(y^2)\).  Substituting \(x=y^2\)
places the coefficients of \(2ha_{\boldsymbol\xi}-t\) in the even
coordinates and does not change its coefficient \(\ell_2\)-norm.  Hence
Lemma~\ref{lem:boolean-identity} gives
\[
 \bigl\|\bigl(2ha_{\boldsymbol\xi}-t\bigr)(y^2)\bigr\|_2^2
 =\|2ha_{\boldsymbol\xi}-t\|_2^2
 =\nu+4\|\mathbf A\boldsymbol\xi-\mathbf b\|_2^2.
\]

\paragraph{The odd block.}
The second summand is
\[
 Py\bigl(2a_{\boldsymbol\xi}-u_{\boldsymbol\mu}\bigr)(y^2)
 =P\sum_{k=1}^n(2\xi_k-\mu_k)y^{2\alpha_k+1}.
\]
For every $k$, the conditions
\(\boldsymbol\xi\leq\boldsymbol\mu\) and
\(\boldsymbol\xi,\boldsymbol\mu\in\{0,1\}^n\) imply
\((2\xi_k-\mu_k)^2=\mu_k\).  Moreover,
\(2\alpha_k+1\leq2c+1\leq N<d\), so the corresponding odd-coordinate
positions are distinct and do not wrap around in $R_d$.  Hence the second
summand has squared norm
\[
 P^2\sum_{k=1}^n(2\xi_k-\mu_k)^2
 =P^2\|\boldsymbol\mu\|_0=P^2\omega.
\]
Since the even and odd coordinates are disjoint, adding the two contributions
proves~\eqref{eq:allowed-distance}.

\textit{Case 2: all remaining $q\in R_d$.}
Write $q=a(y^2)+yb(y^2)$ with $a,b\in R_N^-$, and set
\[
 \Phi(a,b):=\|b\|_2^2+
 \left\|a-\frac{u_{\boldsymbol\mu}}2\right\|_2^2.
\]
For \(0\leq r<N\), write \(a_r\) for the coefficient of \(x^r\) in \(a\).
For each $k$ with \(\mu_k=1\), integrality gives
\((a_{\alpha_k}-1/2)^2\geq1/4\), with equality exactly when
\(a_{\alpha_k}\in\{0,1\}\).  All coefficients of $a$ outside these
positions and all coefficients of $b$ contribute integer squares.
Consequently,
\[
 \Phi(a,b)=\frac\omega4
\]
exactly when $b=0$ and $a=a_{\boldsymbol\xi}$ for some Boolean vector
\(\boldsymbol\xi\leq\boldsymbol\mu\).  For every other pair,
\(\Phi(a,b)\) is larger by at least \(1\).  Hence
\[
 \Phi(a,b)\geq\frac{\omega+4}{4}.
\]

Identify a polynomial in \(R_d\) with the ordered pair of its
even- and odd-degree coefficient blocks, equipped with the product Euclidean
norm.
Equation~\eqref{eq:block-product} becomes
\[
 gq-T_{\boldsymbol\mu}
 =2P\bigl((xb,a)-(0,u_{\boldsymbol\mu}/2)\bigr)
  +2(ha,hb)-(t,0).
\]
Multiplication by $x$ is an isometry in $R_N^-$, so
\(\|xb\|_2=\|b\|_2\).  Young's inequality and
\(\|h\|_1=W\) give the first bound below, while the triangle inequality gives
the second:
\[
 \|(ha,hb)\|_2\leq W\|(a,b)\|_2,
 \qquad
 \|(a,b)\|_2\leq\sqrt{\Phi(a,b)}+\frac{\sqrt\omega}{2}.
\]
The reverse triangle inequality therefore yields
\[
 \|gq-T_{\boldsymbol\mu}\|_2
 \geq2(P-W)\sqrt{\Phi(a,b)}-W\sqrt\omega-\|t\|_2.
\]
Since $P\geq1+s^2W>W$, the lower bound on \(\Phi(a,b)\) gives
\[
 \|gq-T_{\boldsymbol\mu}\|_2
 \geq P\sqrt{\omega+4}
   -W\bigl(\sqrt{\omega+4}+\sqrt\omega\bigr)-\|t\|_2.
\]

Because $0\leq\omega\leq n$ and
\(s=\lceil\sqrt{n+4}\rceil\),
\[
 \sqrt{\omega+4}-\sqrt\omega\geq\frac2s,
 \qquad
 \sqrt{\omega+4}+\sqrt\omega\leq2s.
\]
The definition of $P$ also gives the strict inequality
\[
 \frac{2P}{s}
 =\frac2s+2sW+2\lceil\|t\|_2\rceil
   +2\lceil\sqrt{\nu+4}\rceil
 >2sW+\|t\|_2+\sqrt{\nu+4}.
\]
Combining these estimates gives
\[
\begin{aligned}
 \|gq-T_{\boldsymbol\mu}\|_2-P\sqrt\omega
 &\geq P\bigl(\sqrt{\omega+4}-\sqrt\omega\bigr)
   -W\bigl(\sqrt{\omega+4}+\sqrt\omega\bigr)-\|t\|_2 \\
 &>\sqrt{\nu+4}.
\end{aligned}
\]
Squaring and discarding the nonnegative cross term now gives
\[
 \|gq-T_{\boldsymbol\mu}\|_2^2
 >\bigl(P\sqrt\omega+\sqrt{\nu+4}\bigr)^2
 \geq P^2\omega+\nu+4
 =\Delta_{\boldsymbol\mu}+4.
\]
This proves the second assertion.
\end{proof}

The preceding lemma yields the reduction for masked Boolean
systems.  Its X3C specialization gives the main hardness result, while the
fact that the ideal is independent of the mask will be used for the
fixed-family construction in Section~\ref{sec:fixed}.
For a full-rank principal ideal \(I=(g)\lhd R_d\), we use
\((I,T,\Delta)\) as shorthand for the standard CVP input
\((\mathbf B_g,\coeff(T),\Delta)\).

\begin{theorem}
\label{thm:compiler}
There is a polynomial-time reduction from the masked Boolean system
\eqref{eq:masked-system} to exact decision-CVP that maps
\((\mathbf A,\mathbf b,\boldsymbol\mu)\) to
\((I,T_{\boldsymbol\mu},\Delta_{\boldsymbol\mu})\), where
\(I=(g)\lhd R_d\) is a full-rank principal ideal with a multiplication
basis depending only on \((\mathbf A,\mathbf b)\).
Here \(d=O((M+1)n^2)\) is a power of two, and
\(T_{\boldsymbol\mu}\) and \(\Delta_{\boldsymbol\mu}\) are integral.
\end{theorem}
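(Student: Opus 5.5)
The reduction is the map already fixed in the construction: given \((\mathbf A,\mathbf b,\boldsymbol\mu)\), compute the Sidon exponents of Lemma~\ref{lem:sidon}. From them, form \(c\), \(L\), the row coordinates \(\rho_i\), \(W\), \(\nu\), \(N\), and \(d=2N\). Then build \(h\), \(t\), \(s\), \(P\), the generator \(g\) of~\eqref{eq:generator}, the target \(T_{\boldsymbol\mu}\) of~\eqref{eq:masked-target}, and the threshold \(\Delta_{\boldsymbol\mu}\) of~\eqref{eq:masked-radius}. The output is the multiplication basis \(\mathbf B_g\) together with \(\coeff(T_{\boldsymbol\mu})\) and \(\Delta_{\boldsymbol\mu}\). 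The proof then has three parts: structural claims, size and running time, and correctness.

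\emph{Structural claims.} The generator \(g\) is nonzero, since its coefficient of \(y\) is \(2P\neq0\). Because \(R_d\) is a domain, \(\mathbf B_g\) is a \(\Z\)-basis of \(\mathcal L((g))\) and \(I=(g)\) has full rank. The quantities \(h\), \(t\), \(W\), \(\nu\), \(s\) and \(P\) are computed from \((\mathbf A,\mathbf b)\) alone; the mask enters only through \(u_{\boldsymbol\mu}\) and \(\|\boldsymbol\mu\|_0\). Hence \(g\) and \(\mathbf B_g\) depend only on \((\mathbf A,\mathbf b)\). The target and threshold are integral because \(t\), \(u_{\boldsymbol\mu}\) and \(P\) are.

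\emph{Size and running time.} By Lemma~\ref{lem:sidon}, \(c=O(n^2)\), so \(\rho_M+c=c+(M-1)(2c+1)+c=O(Mn^2)\). Since \(N\) is the least power of two exceeding this quantity, \(N\leq 2(\rho_M+c)+2\). This gives \(d=2N=O((M+1)n^2)\); the \(+1\) absorbs the cases \(n=1\) and small \(M\). All entries are polynomially bounded: \(\|t\|_2^2=4\|\mathbf b\|_2^2+\nu\) with \(\nu\leq n^2M\), so \(P=O(s(sW+\|t\|_2+\sqrt\nu))\) is polynomial in \(M,n\). The basis \(\mathbf B_g\) is a \(d\times d\) matrix whose columns are negacyclic shifts of \(\coeff(g)\), so writing it down takes polynomial time.

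\emph{Correctness.} This is immediate from Lemma~\ref{lem:admissible}. If a feasible \(\boldsymbol\xi\) exists, Case 1 of that lemma gives an element at squared distance exactly \(\Delta_{\boldsymbol\mu}\). If no feasible \(\boldsymbol\xi\) exists, every Boolean \(\boldsymbol\xi\leq\boldsymbol\mu\) has \(\|\mathbf A\boldsymbol\xi-\mathbf b\|_2^2\geq1\), so its element has squared distance at least \(\Delta_{\boldsymbol\mu}+4\). Every other \(q\) exceeds \(\Delta_{\boldsymbol\mu}+4\) by the second assertion of the lemma. The same case split shows that the elements within squared distance \(\Delta_{\boldsymbol\mu}\) are in bijection with the solutions; distinct \(\boldsymbol\xi\) give distinct \(q\), and \(q\mapsto gq\) is injective.

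The only real obstacle is the bookkeeping for the stated bound on \(d\) and for the polynomial bit length of \(P\), and both are routine. The substantive work was already done in Lemma~\ref{lem:admissible}.
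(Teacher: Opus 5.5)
Your proposal is correct and follows essentially the same route as the paper: correctness from Lemma~\ref{lem:admissible} together with the integrality of \(\mathbf A\boldsymbol\xi-\mathbf b\), full rank from the nonzero coefficient \(2P\) of \(y\) in the domain \(R_d\), and polynomial size from \(c=O(n^2)\), \(W\leq Mn\), \(\nu\leq Mn^2\), and \(\|t\|_2^2=4\|\mathbf b\|_0+\nu\). Your bookkeeping for \(N\) is slightly more explicit than the paper's. Your bound \(P=O(s(sW+\|t\|_2+\sqrt\nu))\) should carry an additive \(O(s)\) term, since it fails literally when \(W=\nu=0\) and \(\mathbf b=\mathbf 0\), but this does not affect polynomiality.
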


\begin{proof}
Equation~\eqref{eq:allowed-distance}, together with the integrality of
\(\mathbf A\boldsymbol\xi-\mathbf b\), shows that
the element \(g a_{\boldsymbol\xi}(y^2)\) associated with a
Boolean vector \(\boldsymbol\xi\leq\boldsymbol\mu\) has squared
distance \(\Delta_{\boldsymbol\mu}\) from \(T_{\boldsymbol\mu}\) exactly when
it is a solution and at least \(\Delta_{\boldsymbol\mu}+4\) otherwise.  The second assertion
of Lemma~\ref{lem:admissible} gives the same lower bound for
every remaining \(q\in R_d\). Consequently,
\[
\begin{aligned}
 \min_{q\in R_d}\|gq-T_{\boldsymbol\mu}\|_2^2
 &=\Delta_{\boldsymbol\mu} &&\text{in a YES instance},\\
 \min_{q\in R_d}\|gq-T_{\boldsymbol\mu}\|_2^2
 &\geq\Delta_{\boldsymbol\mu}+4 &&\text{in a NO instance}.
\end{aligned}
\]

Since the coefficient of \(y\) in \(g\) is \(2P\neq0\) and
\(R_d\) is an integral domain, multiplication by \(g\) is injective, so
\((g)\) has full \(\mathbb Z\)-rank.  Moreover, the elements \(v\in(g)\)
satisfying \(\|v-T_{\boldsymbol\mu}\|_2^2\leq\Delta_{\boldsymbol\mu}\) are in
bijection with the solutions of \eqref{eq:masked-system}.

Finally,
\(c=O(n^2)\) and \(\rho_M+c=O((M+1)n^2)\), giving the bound on \(d\).
Moreover, \(W\leq Mn\), \(\nu\leq Mn^2\), and
\(\|t\|_2^2=4\|\mathbf b\|_0+\nu\), so all quantities in the construction
have polynomial encoding length and are computable in polynomial time.  The
generator \(g\) and its multiplication basis depend only on
\((\mathbf A,\mathbf b)\), whereas \(T_{\boldsymbol\mu}\) and
\(\Delta_{\boldsymbol\mu}\) also depend on the mask \(\boldsymbol\mu\).
\end{proof}

We now apply the theorem with
\((\mathbf A,\mathbf b,\boldsymbol\mu)
  =(\mathbf H,\mathbf 1_m,\mathbf 1_n)\).

\begin{theorem}
\label{thm:cyclotomic-cvp}
Exact decision-CVP is \(\NP\)-complete on the coefficient lattices of
full-rank principal ideals in the power-of-two cyclotomic rings \(R_d\).
The hard instances have squared-distance gap \(\Delta\) versus
\(\Delta+4\).
\end{theorem}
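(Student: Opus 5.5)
The plan is to establish membership in \(\NP\) and \(\NP\)-hardness separately. For hardness we specialize Theorem~\ref{thm:compiler} to the X3C instance \((\mathbf A,\mathbf b,\boldsymbol\mu)=(\mathbf H,\mathbf 1_m,\mathbf 1_n)\).

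\emph{Hardness.} Given an X3C instance \((\mathcal U,\mathcal C)\), we compute its incidence matrix \(\mathbf H\). We then run the reduction of Theorem~\ref{thm:compiler} with \(M=m\) and mask \(\mathbf 1_n\). This yields, in polynomial time, the multiplication basis \(\mathbf B_g\) of the full-rank principal ideal \((g)\lhd R_d\), where \(d=O((m+1)n^2)\) is a power of two. It also yields the integral target \(\coeff(T_{\mathbf 1_n})\) and the integral threshold \(\Delta:=\Delta_{\mathbf 1_n}=P^2n+\nu\). With the mask all ones, the masked system \eqref{eq:masked-system} is exactly \(\mathbf H\boldsymbol\xi=\mathbf 1_m\), \(\boldsymbol\xi\in\{0,1\}^n\), that is, the X3C condition. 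The dichotomy established in the proof of Theorem~\ref{thm:compiler} then gives the following: in a YES instance \(\dist(\coeff(T),\mathcal L((g)))^2=\Delta\), and in a NO instance it is at least \(\Delta+4\). So the map is a Karp reduction and also exhibits the stated gap. Most of the difficulty, namely extending the distance separation to every \(q\in R_d\), is already handled by Lemma~\ref{lem:admissible}, so this part is mostly bookkeeping.

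\emph{Membership in \(\NP\).} Exact decision-CVP on general rational lattices is in \(\NP\), and our instances are a special case. The certificate is a coefficient vector \(\mathbf z\in\Z^d\) with \(\|\mathbf B_g\mathbf z-\coeff(T)\|_2^2\le\Delta\). To keep the certificate polynomial-size, we argue as follows. If such a lattice vector \(\mathbf v\) exists, then \(\|\mathbf v\|_2\le\|\coeff(T)\|_2+\sqrt\Delta\). We then bound \(\mathbf z=\mathbf B_g^{-1}\mathbf v\) using Cramer's rule: the entries of \(\mathbf B_g^{-1}\) are ratios of minors of \(\mathbf B_g\), whose bit lengths are polynomial by Hadamard's inequality. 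Hence \(\mathbf z\) has polynomially bounded bit length. Verification is exact integer arithmetic.

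The main obstacle is this \(\NP\)-membership step, and specifically keeping the witness polynomially bounded for general rational inputs. I expect the standard Cramer/Hadamard bound to suffice. Alternatively, one can simply cite the classical fact that exact decision-CVP is in \(\NP\)~\cite{vanEmdeBoas1981}, since the instances here are ordinary rational CVP instances.
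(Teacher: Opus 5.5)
Your proposal is correct and follows the paper's proof: hardness comes from specializing Theorem~\ref{thm:compiler} to \((\mathbf H,\mathbf 1_m,\mathbf 1_n)\), and membership in \(\NP\) from a polynomial-size certificate. The only minor difference is the choice of certificate. You use the basis coordinates \(\mathbf z\in\Z^d\), with size bounded via Cramer's rule and Hadamard's inequality. The paper instead uses \(\coeff(v)\) itself, bounded coordinatewise by \(|T_r|+\lceil\sqrt\Delta\rceil\), and checks membership by solving \(\mathbf B_g\mathbf z=\coeff(v)\) for an integral \(\mathbf z\).
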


\begin{proof}
Theorem~\ref{thm:compiler} gives a deterministic reduction from X3C and the
claimed gap.  For membership in \(\NP\), a YES certificate is
\(\coeff(v)\) for an element \(v\in(g)\) satisfying
\(\|v-T\|_2^2\leq\Delta\).  Write
\(\coeff(v)=(v_0,\ldots,v_{d-1})^\top\) and
\(\coeff(T)=(T_0,\ldots,T_{d-1})^\top\).  Each coefficient satisfies
\(|v_r|\leq|T_r|+\lceil\sqrt\Delta\rceil\), and membership in \((g)\) is
verified by solving the nonsingular multiplication system
\(\mathbf B_g\mathbf z=\coeff(v)\) and checking
\(\mathbf z\in\Z^d\).
The verifier also checks \(\|v-T\|_2^2\leq\Delta\).

\end{proof}

The same characterization of the elements satisfying the
distance threshold shows that exact search-CVP on these ideals is
\(\NP\)-hard under polynomial-time Turing reductions.
Indeed, one call to an exact search-CVP oracle returns a closest lattice vector
\(\coeff(v)\) for some \(v\in(g)\).
If \(\|v-T\|_2^2\leq\Delta\), solving
\(\mathbf B_g\mathbf z=\coeff(v)\) recovers \(q\) through
\(\mathbf z=\coeff(q)\).  The coefficients of \(q\) at
\(y^{2\alpha_k}\) give the exact cover.  Otherwise the X3C instance is
negative.

By Lemma~\ref{lem:canonical}, both results hold under the
canonical embedding.  Squared thresholds and gaps are multiplied by \(d\),
and the closest elements are unchanged.

\subsection{Cyclic extension}
\label{sec:cyclic-cvp}

We next lift these principal-ideal instances to principal cyclic ideals.  Let
\(S_D:=\Z[X]/(X^D-1)\), identified with \(\Z^D\) by coefficient vectors.
Multiplication by \(X\) rotates the coordinates, so every full-rank ideal of
\(S_D\) is a cyclic lattice.

The following lemma lifts a principal ideal in \(R_d\) to a
principal cyclic ideal in \(S_{2d}\).  Under the lift, squared distances
are multiplied by \(8\), and points of \(I\) within squared distance \(H\)
of \(T\) correspond bijectively to points of the lifted ideal within
squared distance \(8H\) of the lifted target.

\begin{lemma}
\label{lem:principal-crt}
Let \(d\) be a power of two, \(I=(g)\lhd R_d\) with \(g\neq0\), and
\(T\in R_d\).  Fix \(H\geq0\), choose an integer \(B\) with \(B^2>4H\),
and set \(D:=2d\).  Let \(\iota_d:R_d\to S_D\) copy the
degree-\(<d\) representative, and define
\[
 G:=(1-X^d)\iota_d(g)+B(1+X^d),
 \qquad
 \widehat T:=2(1-X^d)\iota_d(T).
\]
Then \(J:=(G)\) is full rank, and
\[
 \|2(1-X^d)\iota_d(v)-\widehat T\|_2^2
 =8\|v-T\|_2^2
 \qquad(v\in I).
\]
Every \(w\in J\) satisfying
\(\|w-\widehat T\|_2^2\leq8H\) arises uniquely in this way.
\end{lemma}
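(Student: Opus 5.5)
The plan is to work with the two idempotent-like elements \(E_-:=1-X^d\) and \(E_+:=1+X^d\) of \(S_D\). Since \(X^{2d}=1\) in \(S_D\), they satisfy
\[
 E_-^2=2E_-,\qquad E_+^2=2E_+,\qquad E_-E_+=0,\qquad X^dE_-=-E_-,\qquad X^dE_+=E_+ .
\]
The first consequence is a compatibility between \(E_-\iota_d\) and multiplication in \(R_d\). For integer polynomials \(f,f'\) we have \(E_-f=E_-f'\) whenever \(f\equiv f'\pmod{X^d+1}\), because \(E_-(X^d+1)=0\). Hence, for \(a,b\in R_d\),
\[
 E_-\iota_d(a)\iota_d(b)=E_-\iota_d(ab).
\]
A second consequence is a norm identity. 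For \(f\) of degree \(<d\), the coefficient vector of \(E_-f\) is \((\coeff(f),-\coeff(f))\), so \(\|E_-f\|_2^2=2\|f\|_2^2\). Similarly, \(E_+f\) has coefficient vector \((\coeff(f),\coeff(f))\).

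The distance identity then follows directly. We have \(2E_-\iota_d(v)-\widehat T=2E_-\iota_d(v-T)\), whose squared norm is \(4\cdot2\|v-T\|_2^2=8\|v-T\|_2^2\). The same computation shows that \(v\mapsto2E_-\iota_d(v)\) is injective, which will give the uniqueness claim.

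For full rank I would tensor with \(\Q\) and use the CRT decomposition
\[
 \Q[X]/(X^{2d}-1)\cong\Q[X]/(X^d-1)\times\Q[X]/(X^d+1).
\]
In the first factor \(X^d=1\), so \(G\mapsto2B\). In the second factor \(X^d=-1\), so \(G\mapsto2g\), which is nonzero because \(g\neq0\), and \(\Q[X]/(X^d+1)=K_d\) is a field. We should also have \(B\neq0\); if \(H=0\) we may still choose \(B\geq1\), so I take \(B\neq0\) throughout. Then \(G\) is a unit in \(\Q[X]/(X^{2d}-1)\), multiplication by \(G\) is injective on \(S_D\), and \(J\) has rank \(D\).

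The main step is to characterize the points of \(J\) near \(\widehat T\). Write an arbitrary \(Q\in S_D\) as \(Q=Q_0+X^dQ_1\) with \(\deg Q_0,\deg Q_1<d\). The relations above give
\[
 E_-Q=E_-(Q_0-Q_1),\qquad E_+Q=E_+(Q_0+Q_1).
\]
Put \(a:=Q_0-Q_1\) viewed in \(R_d\), and \(c:=Q_0+Q_1\), an integer polynomial of degree \(<d\). Using \(E_-\iota_d(g)E_+=0\) together with the compatibility above, we obtain
\[
 GQ=E_-\iota_d(ga)+BE_+c.
\]
With \(u:=\coeff(ga)\) and \(\tau:=\coeff(T)\), the two length-\(d\) blocks of \(GQ-\widehat T\) are \(u-2\tau+B\coeff(c)\) and \(-(u-2\tau)+B\coeff(c)\). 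Hence
\[
 \|GQ-\widehat T\|_2^2=2\|u-2\tau\|_2^2+2B^2\|c\|_2^2 .
\]
Suppose this is at most \(8H\). Then \(c\neq0\) would force a value at least \(2B^2>8H\), so \(c=0\), that is, \(Q_1=-Q_0\). It follows that \(a=2Q_0\) and \(GQ=2E_-\iota_d(gQ_0)=2E_-\iota_d(v)\) with \(v:=gQ_0\in I\). Therefore every such \(w\) arises from some \(v\in I\), and uniqueness comes from the injectivity noted earlier.

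I expect the main obstacle to be this last bookkeeping step: showing that the cross term of the \(E_-\) and \(E_+\) parts vanishes. The squared norm of \(GQ-\widehat T\) splits as stated only because the vectors \((x,-x)\) and \((y,y)\) are orthogonal. Once that orthogonality is checked, the parity link between \(a\) and \(c\) (namely \(a\equiv c\pmod 2\) coefficientwise) is what forces \(a\) to be even and lets the lift land exactly on \(2E_-\iota_d(I)\).
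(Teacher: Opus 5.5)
Your proposal is correct and follows the paper's proof essentially step for step. It uses the same sum/difference split of $Q=Q_0+X^dQ_1$, the same orthogonal-block identity $\|GQ-\widehat T\|_2^2=2\|ga-2T\|_2^2+2B^2\|c\|_2^2$ forcing $c=0$ and hence $Q=(1-X^d)Q_0$, and the same CRT argument $G\mapsto(2B,2g)$ for full rank; your separate stipulation $B\neq0$ is unnecessary, since it already follows from $B^2>4H\geq0$.
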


\begin{proof}
For an arbitrary \(Q\in S_D\), write
\(Q=p+X^dq\), where \(p,q\in\mathbb Z[X]\) have degree less than \(d\), and
put \(r:=p+q\) and \(z:=p-q\).  We view these degree-\(<d\)
polynomials as elements of \(R_d\) in products with \(g\).  If
\(F:=GQ-\widehat T=F_0+X^dF_1\), direct expansion in \(R_d\) gives
\begin{equation}
\label{eq:principal-crt-distance}
\begin{gathered}
 F_0=Br+gz-2T,\qquad F_1=Br-(gz-2T),\\
 \|F\|_2^2=\|F_0\|_2^2+\|F_1\|_2^2
 =2\|gz-2T\|_2^2+2B^2\|r\|_2^2.
\end{gathered}
\end{equation}
If \(r\neq0\), then its integer coefficient vector has squared norm at
least one, so \eqref{eq:principal-crt-distance} is at least \(2B^2>8H\).
Hence \(\|GQ-\widehat T\|_2^2\leq8H\) implies
\(r=0\) and therefore \(Q=(1-X^d)p\).
Using \((1-X^d)(1+X^d)=0\),
\((1-X^d)^2=2(1-X^d)\), and reducing
\(\iota_d(g)p\) modulo \(X^d+1\), we obtain
\[
 GQ=2(1-X^d)\iota_d(gp),
 \qquad
\|GQ-\widehat T\|_2^2=8\|gp-T\|_2^2.
\]
Thus \(v=gp\) maps to \(2(1-X^d)\iota_d(v)\).  The coefficients
of degrees \(0,\ldots,d-1\) in this image equal those of \(2v\), so \(v\) is
uniquely determined.

Over \(\Q\), the Chinese remainder theorem gives
\[
 S_D\otimes_{\Z}\Q
 \cong \Q[X]/(X^d-1)\times\Q[X]/(X^d+1),
 \qquad
 G\longmapsto(2B,2g).
\]
Both components are units: \(2B\) is a nonzero scalar and \(2g\neq0\) in the
field \(\Q[X]/(X^d+1)\).  Hence multiplication by \(G\) is invertible over
\(\Q\), and \(J\) has full rank.
\end{proof}

\begin{theorem}
\label{thm:cyclic-cvp}
Exact decision-CVP for tuples
\((D,G,\widehat T,\widehat\Delta)\), where \(D\) is a power of two,
\((G)\lhd S_D\) is full rank, and \(\widehat\Delta\geq0\), is
\(\NP\)-complete.  The lattice basis is computed from the
\(D\) cyclic shifts of \(G\).  The reduction has squared-distance gap
\(\widehat\Delta\) versus \(\widehat\Delta+32\).
\end{theorem}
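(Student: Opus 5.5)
Hardness comes from composing the reduction of Theorem~\ref{thm:compiler}, specialized to \((\mathbf H,\mathbf 1_m,\mathbf 1_n)\), with the lift of Lemma~\ref{lem:principal-crt}; membership in \(\NP\) uses the same certificate argument as Theorem~\ref{thm:cyclotomic-cvp}. Given an X3C instance, we first compute \((g,T,\Delta)\) with \(g\in R_d\), \(d\) a power of two, \(T\) integral and \(\Delta\) an integer. The closest squared distance is \(\Delta\) in YES instances and at least \(\Delta+4\) in NO instances. We then apply Lemma~\ref{lem:principal-crt} with \(H:=\Delta+3\) (any \(H\) with \(\Delta\le H<\Delta+4\) works) and the integer \(B:=\lceil 2\sqrt{H}\,\rceil+1\), so that \(B^2>4H\). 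Set \(D:=2d\), which is again a power of two, and output \((D,G,\widehat T,\widehat\Delta)\) with \(\widehat\Delta:=8\Delta\). The coefficients of \(G\) and \(\widehat T\) are integers whose size is polynomial in the X3C instance, and the basis of the \(D\) cyclic shifts of \(G\) is computed directly. Lemma~\ref{lem:principal-crt} also ensures that \((G)\) is full rank, so this basis is a \(\Z\)-basis.

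\emph{Correctness.} In a YES instance, some \(v\in(g)\) has \(\|v-T\|_2^2=\Delta\). By the lemma, the element \(2(1-X^d)\iota_d(v)\in J\) has squared distance \(8\Delta=\widehat\Delta\) from \(\widehat T\). In a NO instance, suppose some \(w\in J\) had \(\|w-\widehat T\|_2^2<8\Delta+32=8(\Delta+4)\). We need \(8H\) to cover this range: with \(H=\Delta+3\), elements of \(J\) at squared distance at most \(8\Delta+24\) arise from some \(v\in I\) with \(8\|v-T\|_2^2\le 8\Delta+24\), which contradicts \(\|v-T\|_2^2\ge\Delta+4\). For larger distances, \(8\Delta+24<\|w-\widehat T\|_2^2<8\Delta+32\), we go back to the identity in the proof of the lemma. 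If \(r\neq0\), the squared distance is at least \(2B^2\); choosing \(B\) with \(B^2\ge4(\Delta+4)\) makes this at least \(8\Delta+32\). If \(r=0\), the squared distance equals \(8\|gp-T\|_2^2\ge8(\Delta+4)\).

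\emph{Choice of \(H\).} It is cleaner to take \(H:=\Delta+4\) from the start and reason through the identity \eqref{eq:principal-crt-distance} as above, rather than only through the lemma's statement. The resulting gap is \(\widehat\Delta\) versus \(\widehat\Delta+32\). Injectivity of \(v\mapsto 2(1-X^d)\iota_d(v)\) also carries the bijection with exact covers over to \(J\).

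\emph{Membership in \(\NP\).} The certificate is the coefficient vector of \(w\in J\). Each coordinate is bounded by \(|\widehat T_r|+\lceil\sqrt{\widehat\Delta}\rceil\). The verifier solves the nonsingular rational system given by the cyclic-shift basis, checks that the solution is integral, and checks the distance. This is the same argument as before, with full rank supplied by the lemma.

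The main obstacle I expect is the bookkeeping for the NO-case gap. The lemma as stated controls only points within \(8H\), so \(H\) and \(B\) must be chosen so that the entire band below \(8(\Delta+4)\) is covered.
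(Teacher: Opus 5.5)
Your proposal is correct and matches the paper's proof: compose the X3C instance from Theorem~\ref{thm:cyclotomic-cvp} with Lemma~\ref{lem:principal-crt} using \(H=\Delta+4\) and an integer \(B\) with \(B^2>4H\) (the paper takes \(B=2\lceil\sqrt H\rceil+1\)), set \(\widehat\Delta=8\Delta\), and reuse the certificate argument for membership in \(\NP\). With \(H=\Delta+4\), the lemma's statement alone already covers every point at squared distance below \(8(\Delta+4)\), so the detour through \(H=\Delta+3\) and the return to the identity \eqref{eq:principal-crt-distance} are unnecessary.
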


\begin{proof}
Apply Lemma~\ref{lem:principal-crt} to the instance from
Theorem~\ref{thm:cyclotomic-cvp} with
\[
 H:=\Delta+4,
 \qquad
 B:=2\lceil\sqrt H\rceil+1,
 \qquad
 \widehat\Delta:=8\Delta.
\]
The closest squared distance equals \(\widehat\Delta\) in a YES instance and
is at least \(\widehat\Delta+32\) in a NO instance.  All lifted data have
polynomial encoding length.
Membership in \(\NP\) follows as in
Theorem~\ref{thm:cyclotomic-cvp}, using the cyclic shifts of \(G\) as a
basis.
\end{proof}

\begin{remark}
Theorem~\ref{thm:cyclic-cvp} settles one of Micciancio's open questions:
exact Euclidean decision-CVP is \(\NP\)-hard on cyclic
lattices~\cite{Micciancio2007}.
The lift also gives \(\NP\)-hardness of
exact search-CVP on the same cyclic ideals under polynomial-time Turing
reductions.  Given a closest \(\widehat v\in J\), if
\(\|\widehat v-\widehat T\|_2^2>\widehat\Delta\), the X3C instance is
negative.  Otherwise, Lemma~\ref{lem:principal-crt} applies, and halving the
coefficients of degrees \(0,\ldots,d-1\) recovers \(\coeff(v)\).  Solving
\(\mathbf B_g\mathbf z=\coeff(v)\) then yields
\(\mathbf z=\coeff(q)\), whose coefficients at \(y^{2\alpha_k}\) give the
exact cover.
\end{remark}

\section{Fixed principal-ideal families and preprocessing}
\label{sec:fixed}
The closest vector problem with preprocessing (CVPP)~\cite{Micciancio2001} allows all
lattice-dependent work to be performed before the target is known.
We show that exact decision-CVPP on the fixed principal-ideal
family below has no polynomial-time solution unless $\NP\subseteq\Ppoly$.

For a fixed lattice, the preprocessing string is common to all target
queries and can serve as a nonuniform advice string depending
only on the input length.  We therefore encode every X3C
instance of a given size using only the target and threshold.

\paragraph{The all-triples X3C system.}
For $m\geq3$, let $k_m:=\binom m3$, and let
$\mathbf U_m\in\{0,1\}^{m\times k_m}$ be the matrix whose columns are the
incidence vectors of all three-element subsets of $[m]$,
listed in lexicographic order.  A mask
$\boldsymbol\mu\in\{0,1\}^{k_m}$ specifies which of these triples belong to
the input collection.  The collection has an exact cover precisely when
\[
 \mathbf U_m\boldsymbol\xi=\mathbf 1_m
 \quad\text{for some}\quad
 \boldsymbol\xi\in\{0,1\}^{k_m}
 \text{ with }\boldsymbol\xi\leq\boldsymbol\mu.
\]
Thus $(\mathbf U_m,\mathbf 1_m)$ depends only on $m$, while the collection
appears only through the mask.

\Needspace{14\baselineskip}
\begin{theorem}
\label{thm:fixed-family}
There exists a polynomial-time reduction from X3C to exact decision-CVP such
that every X3C instance on a universe of size $m$ is mapped to an instance
$(I_m,T_{\boldsymbol\mu},\Delta_{\boldsymbol\mu})$, where
\(I_m=(g_m)\lhd R_{d_m}\) is a full-rank principal ideal and
its multiplication basis \(\mathbf B_{g_m}\) depends only on \(m\).
\end{theorem}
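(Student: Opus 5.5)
The reduction is an application of Theorem~\ref{thm:compiler} to the all-triples system, with the input collection encoded entirely in the mask. Given an X3C instance \((\mathcal U,\mathcal C)\) with \(|\mathcal U|=m\), identify \(\mathcal U\) with \([m]\). Then compute the mask \(\boldsymbol\mu\in\{0,1\}^{k_m}\) by setting \(\mu_\ell=1\) exactly when the \(\ell\)-th triple in lexicographic order occurs in \(\mathcal C\). Repeated triples in \(\mathcal C\) collapse to a single mask bit, which does not affect the existence of an exact cover. Triples not in the universe cannot occur. By the discussion preceding the theorem, the instance is a YES instance iff the masked system \(\mathbf U_m\boldsymbol\xi=\mathbf 1_m\), \(\boldsymbol\xi\in\{0,1\}^{k_m}\), \(\boldsymbol\xi\leq\boldsymbol\mu\), is feasible. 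Moreover, exact covers correspond bijectively to Boolean solutions once duplicates are identified.

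Next apply the construction of Theorem~\ref{thm:compiler} to \((\mathbf A,\mathbf b,\boldsymbol\mu)=(\mathbf U_m,\mathbf 1_m,\boldsymbol\mu)\). The theorem states that the multiplication basis \(\mathbf B_g\) depends only on \((\mathbf A,\mathbf b)\). Re-reading the construction confirms this. The Sidon exponents \(\alpha_k\) depend only on \(n=k_m\). The quantities \(c,L,\rho_i,W,\nu,N\), the polynomials \(h,t\), and hence \(s,P,d,g\) depend only on \(\mathbf A,\mathbf b\) and \(n\). Only \(u_{\boldsymbol\mu}\), \(T_{\boldsymbol\mu}\) and \(\Delta_{\boldsymbol\mu}\) involve \(\boldsymbol\mu\). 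Since \((\mathbf U_m,\mathbf 1_m)\) depends only on \(m\), setting \(g_m:=g\), \(d_m:=d\) and \(I_m:=(g_m)\) gives a generator and multiplication basis depending only on \(m\). Full rank follows from Theorem~\ref{thm:compiler}. Correctness of the YES/NO gap \(\Delta_{\boldsymbol\mu}\) versus \(\Delta_{\boldsymbol\mu}+4\) follows from Lemma~\ref{lem:admissible} exactly as in the proof of Theorem~\ref{thm:compiler}.

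The remaining issue is that the map is polynomial time, and this is the main thing to check. Here \(n=k_m=\Theta(m^3)\), while the input length of an X3C instance is at least about \(m\), since we may assume every element appears or encode \(m\) explicitly. So \(n\), \(d_m=O((m+1)k_m^2)=O(m^7)\), \(P\), and all coefficients are polynomial in \(m\), and hence in the input size, by the size bounds already established in Theorem~\ref{thm:compiler}. Enumerating all \(\binom m3\) triples lexicographically and building \(\mathbf U_m\) and \(\boldsymbol\mu\) also takes polynomial time. The only subtlety I would guard against is instances where \(m\) is large relative to the encoding of \(\mathcal C\). This is handled by treating \(m\) as part of the input in unary, which is standard for X3C because \(\mathcal U\) is listed explicitly. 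Hence the whole reduction runs in time polynomial in the input length.
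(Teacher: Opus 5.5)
Your proposal is correct and follows the paper's proof. Both apply Theorem~\ref{thm:compiler} to the fixed all-triples system $(\mathbf U_m,\mathbf 1_m)$ and place the input collection only in the mask $\boldsymbol\mu$, so that $g_m$ and $\mathbf B_{g_m}$ depend only on $m$, the data $T_{\boldsymbol\mu},\Delta_{\boldsymbol\mu}$ carry the instance, and $d_m=O(mk_m^2)=O(m^7)$. Your extra remarks on duplicate triples and on the universe being listed explicitly (so that $k_m=\Theta(m^3)$ is polynomial in the input length) are sound details that the paper leaves implicit.
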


\begin{proof}
Apply Theorem~\ref{thm:compiler} to the fixed system
$(\mathbf U_m,\mathbf 1_m)$.  It produces
$I_m=(g_m)\lhd R_{d_m}$ and its multiplication basis independently of the
mask, while the target and threshold encode the collection.
With the notation of Section~\ref{sec:compiler}, \(M=m\)
and \(n=k_m\).
Thus $k_m=\Theta(m^3)$, and the explicit Sidon construction gives
$c=\alpha_{k_m}=\Theta(k_m^2)$.  Since $\rho_m+c=\Theta(mc)$, rounding up
to a power of two gives
 $d_m=\Theta(mk_m^2)=\Theta(m^7)$.  The closest squared distance is
$\Delta_{\boldsymbol\mu}$ for a YES instance and at least
$\Delta_{\boldsymbol\mu}+4$ for a NO instance.
\end{proof}

Together with the certificate argument of
Theorem~\ref{thm:cyclotomic-cvp}, the theorem shows that exact
decision-CVP on this fixed family is $\NP$-complete.  Its main consequence
is the following preprocessing lower bound.

\begin{corollary}
\label{cor:cvpp}
If exact decision-CVPP is solvable in polynomial time on the
family $\{\mathcal L(I_m)\}_{m\geq3}$, each of which is represented by the
corresponding multiplication basis from Theorem~\ref{thm:fixed-family}, then
\[
 \NP\subseteq\Ppoly.
\]
\end{corollary}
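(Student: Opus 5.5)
The plan is to follow Micciancio's advice argument~\cite{Micciancio2001}: fix the preprocessing strings of the family as nonuniform advice and use the uniform reduction of Theorem~\ref{thm:fixed-family} to decide X3C with polynomial-size advice. Since X3C is \(\NP\)-complete under Karp reductions and \(\Ppoly\) is closed under them, it suffices to place X3C in \(\Ppoly\).

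Suppose \((\mathsf{Pre},\mathsf{Dec},p)\) is a polynomial-time exact decision-CVPP scheme for the family. For each \(m\geq3\) set the advice \(\pi_m:=\mathsf{Pre}(\mathbf B_{g_m})\). By Theorem~\ref{thm:fixed-family}, \(\mathbf B_{g_m}\) is computable in time polynomial in \(m\), so \(|\langle\mathbf B_{g_m}\rangle|\) is polynomial in \(m\). Hence \(|\pi_m|\le p(|\langle\mathbf B_{g_m}\rangle|)\) is polynomial in \(m\). For the input length convention, an X3C instance of length \(\ell\) has \(m\le\ell\); I will therefore either give the concatenation \(\pi_3\cdots\pi_\ell\) as the advice for length \(\ell\), or encode instances so that \(m\) is determined by \(\ell\). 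Either choice keeps the advice polynomial in \(\ell\).

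The \(\Ppoly\) algorithm, on an instance with universe size \(m\) and collection \(\mathcal C\), proceeds as follows. Instances with \(m<3\) or \(3\nmid m\) are decided trivially. Otherwise the algorithm forms the mask \(\boldsymbol\mu\in\{0,1\}^{k_m}\) marking the triples in \(\mathcal C\), and computes \(T_{\boldsymbol\mu}\) and \(\Delta_{\boldsymbol\mu}\) in polynomial time as in Theorem~\ref{thm:compiler}. It then outputs \(\mathsf{Dec}(\pi_m,\coeff(T_{\boldsymbol\mu}),\Delta_{\boldsymbol\mu})\). Correctness follows from the gap in Theorem~\ref{thm:fixed-family}: the squared distance is at most \(\Delta_{\boldsymbol\mu}\) exactly when an exact cover exists. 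The decoder's running time is polynomial in \(|\pi_m|\) plus the target and threshold lengths, which are all polynomial in \(m\).

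The step I expect to need the most care is checking that the target and threshold have encoding length polynomial in the input size, not just in \(m\). Since \(k_m=\Theta(m^3)\), this holds because \(d_m=\Theta(m^7)\) and \(P\) is polynomial in \(m\). I must also make sure that repeated triples in \(\mathcal C\) are harmless; the mask simply records them as present. No other obstacle is expected.
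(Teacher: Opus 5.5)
Your proposal is correct and follows essentially the same route as the paper: take \(\pi_m=\mathsf{Pre}(\mathbf B_{g_m})\) as advice and feed the target and threshold produced by Theorem~\ref{thm:fixed-family} to \(\mathsf{Dec}\). The only difference is cosmetic: the paper decides the \(\NP\)-complete mask language \(L_{\rm mask}\), whose input length \(k_m=\binom m3\) already determines \(m\), whereas you decide X3C directly and handle input lengths by concatenating the advice strings or fixing the encoding.
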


\begin{proof}
Fix a scheme
\((\mathsf{Pre},\mathsf{Dec},p)\)
satisfying the
premise of the corollary.
For every \(m\geq3\), let the length-\(k_m\) strings of
\(L_{\rm mask}\) be precisely the masks
\(\boldsymbol\mu\in\{0,1\}^{k_m}\) whose selected triples admit an exact
cover.
The standard collection and mask encodings of X3C can be
converted into each other in polynomial time, so
$L_{\rm mask}$ is $\NP$-complete.  Since $k_m=\binom m3$ is strictly
increasing, the input length determines $m$.  Other lengths are rejected.
For length $k_m$, use as advice
\(\pi_m:=\mathsf{Pre}(\mathbf B_{g_m})\).  The basis has
encoding length \(m^{O(1)}\), hence \(k_m^{O(1)}\), and
Definition~\ref{def:cvpp} gives the same advice bound.  Given
$\boldsymbol\mu$, Theorem~\ref{thm:fixed-family} computes the target and
threshold in polynomial time, and the online decoder uses the advice to decide
whether $\boldsymbol\mu\in L_{\rm mask}$.  Therefore
$L_{\rm mask}\in\Ppoly$, implying $\NP\subseteq\Ppoly$.
\end{proof}

By the Karp--Lipton theorem~\cite{KarpLipton1980}, such a preprocessing
scheme would collapse the polynomial hierarchy to $\Sigma_2^{\mathsf P}$.

\Needspace{8\baselineskip}
\begin{remark}
Analogues of Theorem~\ref{thm:fixed-family} and
Corollary~\ref{cor:cvpp} hold after lifting the fixed family to principal
cyclic ideal lattices.  Let \(P_m,\nu_m\) be the values from
Section~\ref{sec:compiler} for \((\mathbf U_m,\mathbf 1_m)\), and set
\(H_m:=P_m^2k_m+\nu_m+4\) and
\(B_m:=2\lceil\sqrt{H_m}\rceil+1\).  Since
\(\Delta_{\boldsymbol\mu}+4\leq H_m\), applying
Lemma~\ref{lem:principal-crt} with \(H=H_m\) and \(B=B_m\) yields, for each
\(m\), one principal cyclic ideal whose basis consists of the cyclic shifts of
its generator.  Only the lifted target and threshold depend on the mask.  The
threshold is \(8\Delta_{\boldsymbol\mu}\), and the squared-distance gap is
\(32\).

A polynomial-time solution to exact decision-CVPP on this fixed cyclic family
would imply \(\NP\subseteq\Ppoly\).  Exact decision-CVP on the same family
remains \(\NP\)-complete, thereby answering the exact decision version of
Micciancio's question of whether CVP is \(\NP\)-hard even for a fixed family
of cyclic lattices~\cite{Micciancio2007}.
\end{remark}


\section{Conclusion}
\label{sec:conclusion}

In this work, we prove that exact Euclidean decision-CVP is
\(\NP\)-complete on the
coefficient lattices of nonzero principal ideals in power-of-two cyclotomic
rings.  The construction also yields \(\NP\)-hardness for exact search-CVP.
We extend both results to full-rank principal cyclic ideal lattices using a
lift that scales squared distances exactly.  We also
construct uniformly
computable fixed cyclotomic and cyclic principal-ideal families on which
exact decision-CVP remains \(\NP\)-complete.  A polynomial-time solution to
exact decision-CVPP on either family would imply \(\NP\subseteq\Ppoly\).

\paragraph{Use of generative AI.}
The authors used GPT-5.6-sol (Codex) to assist with language
editing, presentation, and preliminary verification of mathematical
arguments during the early drafting stage.  All mathematical
claims and proofs in this manuscript were checked by the authors, who take full
responsibility for the content of the paper.

\small
\bibliographystyle{alpha}
\bibliography{references}

\end{document}